\newcommand{\RR}{\mathbb{R}}
\newcommand{\set}[2]{\left\{#1\;\left|\; \; #2\right.\right\}}
\definecolor{lightgray}{gray}{0.80}
\newtheorem{theorem}{Theorem}
\newtheorem{corollary}{Corollary}
\newtheorem{proposition}{Proposition}
\theoremstyle{remark}
\theoremstyle{definition}
\newtheorem{definition}{Definition}
\newtheorem{example}{Example}
\begin{document}
 
 
\title{Time Will Tell:\\ Recovering Preferences when Choices Are Noisy\thanks{We thank Drew Fudenberg, Ian Krajbich, Paulo Natenzon, Antonio Rangel, Jakub Steiner, Tomasz Strzalecki, Ryan Webb, and Michael Woodford for helpful comments. We are also grateful to seminar audiences in the Workshop on Neuroeconomics and the Evolution of Economic Behavior (Vancouver, 2018), the workshop on Stochastic Choice (Barcelona, 2018), and the Sloan-Nomis Workshop on the Cognitive Foundations of Economic Behavior (Vitznau, Switzerland, 2018).}}


%

\author{ 
\begin{tabular}{ccc}
Carlos Al\'os-Ferrer\thanks{Email: {carlos.alos-ferrer@econ.uzh.ch}. Department of Economics, University of Zurich. Bl\"umlisalpstrasse 10, 8006 Zurich, Switzerland.} & Ernst Fehr\thanks{Email: {ernst.fehr@econ.uzh.ch}. Department of Economics, University of Zurich. Bl\"umlisalpstrasse 10, 8006 Zurich, Switzerland.}  & Nick Netzer\thanks{Email: {nick.netzer@econ.uzh.ch}. Department of Economics, University of Zurich. Bl\"umlisalpstrasse 10, 8006 Zurich, Switzerland.}\\ 
\multicolumn{3}{c}{Department of Economics, University of Zurich} 
\end{tabular}
}  
 

 \date{This Version: October 2018}

 
\maketitle
\thispagestyle{empty}


\begin{abstract} 
The ability to uncover preferences from choices is fundamental for both positive economics and welfare analysis. Overwhelming evidence shows that choice is stochastic, which has given rise to random utility models as the dominant paradigm in applied microeconomics. However, as is well known, it is not possible to infer the structure of preferences in the absence of assumptions on the structure of noise. This makes it impossible to empirically test the structure of noise independently from the structure of preferences. Here, we show that the difficulty can be bypassed if data sets are enlarged to include response times. A simple condition on response time distributions (a weaker version of first-order stochastic dominance) ensures that choices reveal preferences without assumptions on the structure of utility noise. Sharper results are obtained if the analysis is restricted to specific classes of models. Under symmetric noise, response times allow to uncover preferences for choice pairs outside the data set, and if noise is Fechnerian, even choice probabilities can be forecast out of sample. We conclude by showing that standard random utility models from economics and standard drift-diffusion models from psychology necessarily generate data sets fulfilling our sufficient condition on response time distributions.

\vspace{0.3cm}

\noindent {\bf JEL Classification:} D11 $\cdot$ D81 $\cdot$ D83 $\cdot$ D87 

\noindent {\bf Keywords:} revealed preference $\cdot$ random utility models $\cdot$ response times

\end{abstract}


\vfill
\noindent {\bf Working Paper.} This is an author-generated version of a research manuscript which is circulated exclusively for the purpose of facilitating scientific discussion. All rights reserved. The final version of the article might differ from this one.

 \newpage 
%
%
%

%
%


\setcounter{page}{1}

\section{Introduction}

Revealed-preference arguments lie at the foundation of economics \citep[e.g.][]{Samuelson38,Houthakker50,Arrow59}. Preferences revealed by choice are used in positive economics to predict behavior in novel situations, and in normative economics to evaluate the desirability of economic policies. In fact, the very use of a utility function entails the implicit assumption that the preferences represented by that function have been deduced from observations of choice behavior. 

The traditional revealed-preference approach assumes that choices are deterministic. This assumption is contradicted by real-world choice behavior, as argued already in the classical work of \citet{Fechner1860} and \citet{Luce59}. Extensive evidence shows that individuals often make different choices when confronted with the same set of options repeatedly
\citep[among many others, see][]{Tversky69,Camerer89,HeyOrme94,AgranovOrtoleva17}. In view of this evidence, the traditional approach has been modified by adding a random component to (cardinal) utility \citep[e.g.][]{Thurstone27,Marschak60,McFadden01}. Random utility has several different interpretations, ranging from noise in an individual's perception of the options, over temporary fluctuations of tastes, to unobserved heterogeneity in a population of agents. With assumptions about the distribution of the random utility component, it becomes possible to deduce an underlying deterministic utility function from observed choice behavior. Depending on the interpretation, this utility function represents either the true preferences of one given individual, or the average preferences in a population.

A major problem with the random utility approach is that the distributional assumptions may actually drive the results. It is a well-known (but rarely stated) fact in stochastic choice theory that nothing can be learned about preferences without making distributional assumptions. The flipside of this result is that anything can be ``learned'' by making the suitable assumptions on the structure of noise. Unfortunately, these assumptions cannot be verified, because utility is a latent variable that is not directly observed. Hence, the dependence on possibly unwarranted assumptions is not just an abstract theoretical problem but is known to plague empirical research \citep[see][]{HeyOrme94}. For instance, \citet{BuschenaZilberman00} showed that, for the same data set, assuming homoskedasticity supports non-expected utility models, while expected utility models cannot be improved upon when heteroskedasticity is allowed.

In this paper, we show that the problem can be overcome by using data on response times. This is because the distribution of response times, which is in principle observable, contains information about the unobservable distribution of utility.\footnote{The time it takes to make a decision (``response time'') can always be observed in laboratory experiments, but even outside the laboratory response times are in principle an observable outcome of a choice process that can be collected by researchers, firms, or other interested parties. In contrast, the distribution of utility noise is intrinsically unobservable.} We derive, first, a simple and intuitive condition on the distribution of response times that ensures that preferences can be identified from choice data without any assumptions on the structure of noise. Second, we show that under symmetric noise, response times enable the identification of preferences between alternatives for which no previous choice data exist. This would not be possible without response time data unless one is willing to impose much stronger (untestable) assumptions than symmetry on the utility noise. Third, we show that if one is willing to assume that utility noise is Fechnerian---an underlying assumption of probit and logit models---response time data enable the calculation of precise choice probabilites for alternatives for which no choice data exist. Again, this would not be possible without response time data. 

Our approach is made possible by the fact that, despite being stochastic, choice behavior obeys certain well-known regularities. One of those regularities, often referred to as the {\em psychometric function}, is the fact that easier choice problems are more likely to elicit correct responses than harder problems. This can be traced back to perceptual discrimination experiments in psychophysics, where an objectively correct response exists (e.g.\ choosing the brightest or loudest stimulus). It is perhaps one of the most robust facts in all of psychology that the percentage of correct choices increases with the difference in stimuli \citep{Laming85,Klein01,WichmannHill01}. Conversely, choice becomes noisier when stimuli are more similar and hence the problem is harder. This finding also extends to cases where the correct response is subjective (e.g. favorite colors) and is uncovered by the researcher through ratings \citep{Dashiell37}. In economics, the classical work of \citet{MostellerNogee51} showed that the phenomenon also occurs in decisions under risk, with Bernoulli utilities estimated by assuming that choice frequencies close to $1/2$ reveal indifference and using linear interpolation to pin down other utility values.\footnote{Specifically, \citet{MostellerNogee51} used bet/pass decisions with choice frequencies close to $1/2$ to write down indifference equations, which allowed them to derive utility values for a subset of monetary values and then to extend them to the whole space by piecewise linear interpolation. Since the remaining decisions were not used for the estimation procedure, they could be used to investigate the relation between choice frequencies and utility differences.} In their data, alternatives with larger estimated utility were not always chosen, but the percentage of choices in favor of the high-utility option increased in the utility difference between the options. This corresponds exactly to the psychometric function, with the difficulty of the (binary) choice problem being measured by the subjective utility difference between the available options, and easier choices being those with a larger absolute utility difference. In fact, this psychometric relationship is an integral part of standard random utility models, which assume that choice probabilities are monotone in utility differences. 

Our approach in this paper rests on integrating a second well-known regularity, often referred to as the {\em chronometric function}, into the standard random utility framework. The chronometric function describes the fact that easier choice problems take less time to respond than harder problems. As in the case of the psychometric function, there is overwhelming evidence from psychophysics showing this regularity. For instance, \citet{MoyerLandauer67} demonstrate that chronometric effects exist even for the simple question of which of two numbers is larger \citep[see also][]{MoyerBayer76,DehaeneDupouxMehler90}. The finding extends to choice based on subjective preferences, as in the work on favorite colors by \citet{Dashiell37}. \citet{Chabrisetal09} and \citet{CardsRT} show that the phenomenon also occurs in intertemporal decisions and decisions under risk, respectively, with utilities estimated following a logit specification for the distribution of noise. That is, response times are decreasing in estimated utility differences.\footnote{As observed by \citet{KrajbichOudFehr14}, the chronometric effect is paradoxical from an economics point of view because it implies that more time is spent on decisions where the stakes are lower. An extreme example is the well-known ``Buridan's ass'' paradox, where a decision-maker takes forever to make a decision among identically evaluated options. The more appropriate interpretation of the phenomenon is that it reflects neural processes which uncover the difference in values between the options, and it takes longer to differentiate closer values. In other words, it takes a longer time to discover that one is indifferent than to recognize a clear preference.} While the chronometric function is just the twin of the psychometric function, and the latter is already incorporated into standard economics models, the literature has only recently started to apply the chronometric relationship in economic models of choice \citep[e.g.][]{Krajbichetal15,PrefRev,FudenbergStrackStrzalecki18}.\footnote{Recent empirical work \citep{SchotterTrevino14,KonovalovKrajbich17,Clithero18} has concentrated on the use of response times in structural models with specific assumptions on error distributions. We will discuss the relation of our work to these contributions in Section \ref{Subsec.LitRT}.} Presumably this is because classical choice theory has not been interested in response times. We are also not interested in response times {\em per se}, but we demonstrate the value of using them in the classical revealed preference approach. That is, we view response times strictly as a tool to make better inferences about the structure of preferences.

Like much of the related literature, we will focus on binary choice problems in this paper. We will assume that, in any given instance of the problem, the realized response time is a decreasing function of the realized absolute utility difference between the two options. If one interprets random utility as reflecting fluctuating tastes or unobserved heterogeneity, this modeling reflects that the utility difference and hence the difficulty of the choice problem varies over time or between individuals. Choice and response time are then determined by the realization of the utility difference at a particular point in time or for a randomly chosen individual. Alternatively, one can also interpret random utility as a consequence of perceptual noise. In this case, our modeling can be given a foundation from the perspective of the evidence accumulation models used in psychology and neuroscience \citep{Ratcliffetal16,ShadlenKiani13} and recently economics \citep{FudenbergStrackStrzalecki18}. We will discuss the connection between our approach and these models below.

To provide an intuition for our results, consider the choice between two options $x$ and $y$, where $x$ is chosen with probability $p$ and $y$ with probability $1-p$. For the sake of clarity, let us adopt the interpretation that these probabilities describe the choices of a single individual across many repetitions of the problem. We will first show that observing $p > 1/2$ is not sufficient to conclude that the individual prefers $x$ to $y$, i.e., that the underlying utility of $x$ is larger than that of $y$, if no assumptions about the shape of the utility distribution are made. Specifically, it is possible to rationalize the data by a random utility model (RUM) that has a deterministic utility function with $u(x) < u(y)$ and asymmetric noise with zero mean. The asymmetry is such that the realized utility difference between $x$ and $y$ is often positive, generating $p > 1/2$, but takes large absolute values whenever it is negative. We will show that asymmetric distributions in fact arise very naturally, for instance in random parameter models that have recently become prominent \citep{ApesteguiaBallesterLu17,ApesteguiaBallester18}. Wrongfully assuming symmetry then leads to false inferences about preferences.

Now assume we have data on the joint distribution of choices and response times. We will show in Theorem \ref{Thm.RUMRT} that $p > 1/2$ combined with, informally speaking, a comparatively slow choice of $y$ relative to $x$ is sufficient to conclude that the individual prefers $x$ to $y$, even without making any assumptions about the shape of the utility distribution. A slow choice of $y$ relative to $x$ reveals that the utility difference cannot be distributed too asymmetrically in the way described above, because negative utility differences with large absolute values would generate quick choices of $y$, based on the chronometric relationship. Importantly, this argument does not presume knowledge of the shape of the chronometric function beyond monotonicity (and some technical properties). More formally, let $F(x)(t)$ and $F(y)(t)$ be the cumulative distribution functions of response times conditional on the choice of $x$ and $y$, respectively. Our criterion states that when we observe
\[ F(y)(t) \leq \frac{p}{1-p} F(x)(t) \text{ for all } t \geq 0,\]
then any random utility model with a chronometric function (RUM-CF) which rationalizes the data must satisfy $u(x) \geq u(y)$. A similar statement holds for strict preferences. In the limit as $p \rightarrow 1/2$, that is, as choice data alone becomes uninformative, this becomes the condition that choice of $y$ must be slower than choice of $x$ in the first-order stochastic dominance sense.  As $p$ grows, hence choice data becomes more indicative of a preference, the condition becomes weaker and only requires that choice of $y$ is not much faster than choice of $x$. 

Our result provides a revealed preference criterion for the analyst who is reluctant to make distributional assumptions in random utility models. Based on observed response times, it is often possible to deduce preferences without such assumptions. This avoids making mistakes like those that arise when symmetry is wrongfully imposed. Despite its cautiousness, the criterion is tight and fully recovers preferences from a large class of data sets, as we will argue below. We will also argue that the criterion can sometimes arbitrate in cases of stochastically inconsistent choices.

We then study the case where the analyst has reasons to believe that utility differences are symmetrically distributed, as is often assumed in the literature (e.g.\ in any application with logit or probit choice). It then follows immediately that $p > 1/2$ implies $u(x) > u(y)$, so preferences are revealed by choices without response times. But now we show that the use of response time data enables the identification of preferences for choice pairs outside the set of available choice data. For the case of deterministic choices and deterministic response times, this has been noted before. \citet{KrajbichOudFehr14} argue that a slow choice of $z$ over $x$ combined with a quick choice of the same $z$ over $y$ reveals a preference for $x$ over $y$, even though the choice between $x$ and $y$ is not directly observed and a transitivity argument is not applicable. The idea is that, based on the chronometric relationship, the positive utility difference $u(z)-u(x)$ must be smaller than the positive utility difference $u(z)-u(y)$, which implies $u(x)>u(y)$. To date, however, it has remained an open question how to implement this idea, since real-world choices and response times are stochastic, and hence it is unclear what ``choice of $z$ over $x$'' and ``slow versus fast'' exactly means. For instance, is ``faster than'' defined in terms of mean response times, median response times, or some other characteristic of the response time distribution? Our Theorem \ref{Thm.SRUMRT} provides an answer to that question. Suppose $z$ is chosen over $x$ with a probability $p(z,x)>1/2$, which indeed implies $u(z)-u(x)>0$ in the symmetric-noise case. Then we define $t(z,x)$ as a specific percentile of the response time distribution for $z$, namely the $(1-p(z,x))/p(z,x)$-percentile. Analogously, if $z$ is chosen over $y$ with a probability $p(z,y)>1/2$, which implies  $u(z)-u(y)>0$, the corresponding percentile $t(z,y)$ can be defined. Our result shows that these observable percentiles are the appropriate measure of preference intensity for the stochastic setting, in the sense that $t(z,x) > t(z,y)$ implies $u(z)-u(x) < u(z)-u(y)$ and hence a revealed (strict) preference for $x$ over $y$. That is, inference cannot be based on mean, median, maximum, or minimum response times. The correct measurement is the $(1-p)/p$-percentile of the distribution of response times, which in particular requires using a different percentile from each choice pair, adjusting for the respective choice frequencies. This is a quantification which comes out of the analytical model and which, to the best of our knowledge, has never been empirically utilized. Yet, since a revealed strict preference for $x$ over $y$ translates into a choice probability $p(x,y)>1/2$ when the utility distribution is symmetric, it generates out-of-sample predictions that should be easy to test empirically.

In the traditional approach without response times, making out-of-sample predictions requires even stronger distributional assumptions than just symmetry. Random utility models like probit or logit are instances of Fechnerian models \citep{Debreu58,Moffatt15}, in which the utility difference between the two options follows the exact same distributional form in all binary choice problems. With this Fechnerian assumption (but without assuming a specific distributional form), already the choice observation $p(z,x) < p(z,y)$ reveals a preference for $x$ over $y$. Put differently, the Fechnerian assumption enables an exhaustive elicitation of ordinal preferences even outside the data set. But now we show that the use of response time data makes it possible to move beyond ordinal preferences and make predictions of precise choice probabilities. Theorem \ref{Thm.FRUMRT} provides a closed-form formula to predict $p(x,y)$ based on observables, i.e., choice probabilities and response times, from only the binary choices between $z$ and $x$ and between $z$ and $y$. 

The general pattern that emerges from our results is that response time data allow us to obtain results that would otherwise require an additional distributional assumption that might be empirically unjustified. Response time data make it possible to get rid of assumptions because the distribution of response times contains information about the distribution of utility. This enables the revelation of preferences without any distributional assumptions, makes it possible to extrapolate preferences to cases for which no choice data exist with a symmetry assumption, and even generates precise probability predictions with the Fechnerian assumption.

Our Theorem \ref{Thm.RUMRT} provides a robust sufficient condition for preference revelation, which essentially goes ``from data to models.'' To investigate how much bite our criterion has, we also look at the converse implication ``from models to data,'' i.e., we study stochastic choice functions with response times (SCF-RTs) that are generated by standard models from the received literature. Put differently, we now take a specific data-generating process as given and apply our agnostic method that does not presume knowledge of the process to the resulting data set. We do this first for the whole class of RUM-CFs that have symmetric distributions, which contains the (generalized) probit and logit models as special cases but goes far beyond them. We show that our criterion recovers all preferences correctly when any such model actually generated the data. In other words, our sufficient condition is also necessary and has maximal bite for the entire class of SCF-RTs generated by symmetric RUM-CFs. Even the analyst who believes in the probit or logit distribution can work with our criterion, because it must always hold in his data. Our approach will yield the same revealed preferences as an application of the full-fledged model if his belief is correct, but avoids a mistake if his belief is incorrect. We show that this is still the case with additional noise in response times, as long as the noise is from an independent source (like a stochastic chronometric function or imperfect observation) and does not systematically reverse the chronometric relationship.

Second, we study the class of drift-diffusion models (DDMs) with constant or collapsing decision boundaries, which are prominent in psychology and neuroscience \citep[e.g.][]{Ratcliff78,ShadlenKiani13}. These models have recently attracted attention in economics because they can be derived from optimal evidence accumulation mechanisms \citep{Drugowitschetal12,TajimaDrugowitschPouget16,FudenbergStrackStrzalecki18}. We show that our criterion again recovers all preferences correctly from data that is generated by a DDM. This is almost immediate for the classical case of constant boundaries, but it is also a property of the case of collapsing boundaries. Hence our previous statement on believers in probit or logit models also applies to believers in drift-diffusion models. Furthermore, in the case of a DDM with collapsing boundaries, there is a surprising and so far unnoticed connection between the decision boundary and the chronometric function: one can be interpreted as the inverse of the other. 

The paper is structured as follows. Section \ref{Sec.Model} presents the formal setting. Section \ref{Sec.Reveal} develops the main results, devoting separate subsections to the unrestricted, symmetric, and Fechnerian cases. Section \ref{Sec.Specific} shows that choice data generated by standard models from economics and psychology fulfills our main criterion for preference revelation. Section \ref{Sec.Lit} discusses the related literature in more detail, and Section \ref{Sec.Conclusion} concludes. All proofs omitted from the main text can be found in the appendix.

\section{Formal Setting and Definitions}\label{Sec.Model}

Let $X$ be a finite set of options. Denote by $C=\set{(x,y)}{x,y\in X, x\ne y}$ the set of all binary choice problems, so $(x,y)$ and $(y,x)$ both represent the problem of choice between $x$ and $y$. Let $D\subseteq C$ be the set of choice problems on which we have data, assumed to be non-empty and symmetric, that is, $(x,y)\in D$ implies $(y,x)\in D$. To economize notation, we let the set $D$ be fixed throughout.

\begin{definition}\label{defscf}
A {\em stochastic choice function} (SCF) is a function $p$ assigning to each $(x,y) \in D$ a probability $p(x,y)>0$, with the property that $p(x,y)+p(y,x)=1$.
\end{definition}

In an SCF, $p(x,y)$ is interpreted as the probability of choosing $x$ in the binary choice between $x$ and $y$, and $p(y,x)$ is the probability of choosing $y$. The assumption that $p(x,y)>0$ for all $(x,y) \in D$ implies that choice is stochastic in a non-degenerate sense, because each alternative is chosen with strictly positive probability.

Since there is no universally agreed-upon definition of random utility models, we will work with a fairly general definition which encompasses many previous ones. In particular, it is convenient for our analysis to directly describe for each $(x,y) \in C$ the distribution of the utility difference between the two options.

\begin{definition}\label{def.rum}
A {\em random utility model} (RUM) is a pair $(u,g)$ where $u: X \rightarrow \RR$ is a utility function and $g$ assigns a density function $g(x,y)$ on $\RR$ to each $(x,y) \in C$, fulfilling the following properties:

\begin{enumerate}[(RUM.3)]

\item[(RUM.1)] $\int_{-\infty}^{+\infty} v g(x,y)(v) d v = u(x) - u(y)$,

\item[(RUM.2)] $g(x,y)(v)=g(y,x)(-v)$ for all $v \in \RR$, and

\item[(RUM.3)] the support of $g(x,y)$ is connected.

\end{enumerate}
\end{definition}

In a RUM, the utility function $u$ represents the underlying preferences which the analyst aims to uncover, while the density $g(x,y)$ also incorporates the noise, that is, it describes the distribution of the random utility difference between $x$ and $y$. We denote the corresponding cumulative distribution function by $G(x,y)$. The noise in each option's utility is assumed to have zero mean, so the expected value of the random utility difference between $x$ and $y$ must be $u(x)-u(y)$, as required by (RUM.1). We will also use the notation $v(x,y)=u(x)-u(y)$. Condition (RUM.2) states that $g(x,y)$ and $g(y,x)$ describe the same random utility difference but with opposite sign. (RUM.3) is a regularity condition stating that there are no gaps in the distribution of an option's utility.

Our definition reflects the conventional idea that RUMs consist of a deterministic utility function plus stochastic error terms, as typically implemented in microeconometrics \citep[see][for a history of the approach]{McFadden01}. It is more general than conventional models because the densities $g(x,y)$ are unrestricted across choice pairs. This allows us to accommodate pair-specific factors other than utility differences that may affect choice probabilities (and response times), such as obvious dominance relations between some options \citep[see e.g.][and our discussion in Section \ref{Subsec.LitSC}]{HeNatenzon18}. An alternative definition views RUMs as a distribution over deterministic utility functions \citep[see, e.g.][]{GulPesendorfer06,GulNatenzonPesendorfer14}.\footnote{This has its roots in the work of \citet{BlockMarschak60}, who posed the question of when a stochastic choice function can be rationalized by a probability distribution over preference orderings, the idea being that one ordering is realized before every choice.} This approach is also less general than ours, because not every collection $g$ of utility difference distributions can be generated by a distribution over the set of deterministic utility functions \citep{Falmagne78,BarberaPattanaik86}. An additional benefit of our generality is that our positive results on preference revelation become stronger, because they hold within a larger class of models. The added generality does not matter for the impossibility result in Proposition \ref{Prop.RUM}, which would still hold for RUMs defined as distributions over deterministic utility functions.

A RUM generates choices by assuming that the option chosen is the one with the larger realized utility. Specifically, given a RUM $(u,g)$ and a pair $(x,y)\in C$, the probability that the utility of $y$ exceeds that of $x$ is $G(x,y)(0)$. This motivates the following definition.

\begin{definition}\label{defratrum}
A RUM $(u,g)$ {\em rationalizes} an SCF $p$ if $G(x,y)(0)=p(y,x)$ holds for all $(x,y) \in D$.
\end{definition}

We now extend the framework and include response times, by adding conditional response time distributions for each choice. This is the easiest way of describing a joint distribution over choices and response times.

\begin{definition}
A {\em stochastic choice function with response times} (SCF-RT) is a pair $(p,f)$ where $p$ is an SCF and $f$ assigns to each $(x,y) \in D$ a strictly positive density function $f(x,y)$ on $\RR^+$. 
\end{definition}

The density $f(x,y)$ describes the distribution of response times conditional on $x$ being chosen in the binary choice between $x$ and $y$. The corresponding cumulative distribution function is denoted by $F(x,y)$. It would be straightforward to introduce lower or upper bounds on response times, for instance due to a non-decision time or a maximal observed response time. We refrain from doing so here for notational convenience and comparability with the literature \citep[e.g.][]{FudenbergStrackStrzalecki18}.

\begin{definition}\label{defrumrt}
A {\em random utility model with a chronometric function} (RUM-CF) is a triple $(u,g,r)$ where $(u,g)$ is a RUM and $r: \RR^{++} \rightarrow \RR^+$ is a continuous function that is strictly decreasing in $v$ whenever $r(v)>0$, with $\lim_{v \rightarrow 0} r(v) = \infty$ and $\lim_{v \rightarrow \infty} r(v) =0$.
\end{definition}

In a RUM-CF, $r$ represents the chronometric function. It maps utility differences $v$ into response times $r(|v|)$, such that larger absolute utility differences generate shorter response times. The assumption that $\lim_{v \rightarrow 0} r(v) = \infty$ and $\lim_{v \rightarrow \infty} r(v) =0$ ensures that the model can encompass all response times observed in an SCF-RT. Our definition allows for functions like $r(v)=1/v$ that are strictly decreasing throughout, and also for functions that reach $r(v)=0$ for large enough $v$. The latter case will arise when we construct chronometric functions from sequential sampling models in Section \ref{Sec.Specific}. Figure \ref{Fig.CF} illustrates both cases, taking advantage of the fact that the inverse $r^{-1}(t)$ is well-defined for the restriction of $r$ to the subset where $r(v)>0$.

\begin{figure}[!t]
  \centering
\psset{xunit=2cm,yunit=2.5cm,linewidth=2pt}
\begin{pspicture}(0,0)(6,1.5)
\psline[linewidth=1pt](0,0)(6,0)
\psline[linewidth=1pt](0,0)(0,1.5)
\psbezier(0,1)(4,0.75)(2,0.25)(6,0.1)

\psline[linestyle=dashed](2.75,0)(2.75,0.6)
\psline[linestyle=dashed](0,0.6)(2.75,0.6)

\rput[c](2.75,-0.15){$t$}
\rput[c](-0.25,0.6){$v$}
\rput[c](4.5,0.5){$r^{-1}(\cdot)$}

\end{pspicture}  

\vspace{0.5cm}

\begin{pspicture}(0,0)(6,1.5)
\psline[linewidth=1pt](0,0)(6,0)
\psline[linewidth=1pt](0,0)(0,1.5)
\psbezier(0.1,1.5)(1,0)(2,0.2)(6,0.1)

\psline[linestyle=dashed](1,0)(1,0.56)
\psline[linestyle=dashed](0,0.56)(1,0.56)

\rput[c](1,-0.1){$t$}
\rput[c](-0.25,0.56){$v$}
\rput[c](4,0.4){$r^{-1}(\cdot)$}

\end{pspicture} 
\caption{Two illustrations of chronometric functions $r$, mapping realized utility differences (vertical axis) into response times (horizontal axis).}\label{Fig.CF}
\end{figure}
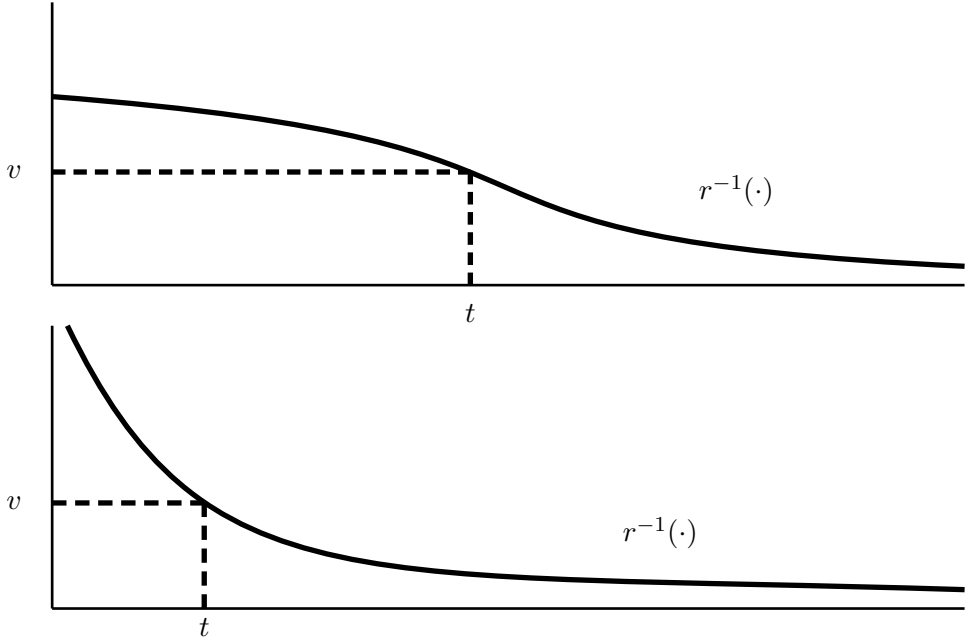

In addition to choices, a RUM-CF generates response times by assuming that the realized response time is related to the realized utility difference through the function $r$. Specifically, given a RUM-CF $(u,g,r)$ and a pair $(x,y)\in C$, the probability of a response time of at most $t>0$, conditional on $x$ being chosen over $y$, is the probability that the realized utility difference is at least $r^{-1}(t)$, conditional on that difference being positive. This probability can be calculated as $\left[1-G(x,y)(r^{-1}(t))\right]/\left[1-G(x,y)(0)\right]$, which motivates the following definition.

\begin{definition}\label{defratscfrt}
A RUM-CF $(u,g,r)$ {\em rationalizes} an SCF-RT $(p,f)$ if $(u,g)$ rationalizes $p$ and
\begin{align}\label{ruratsc}\frac{1-G(x,y)\left(r^{-1}(t)\right)}{1-G(x,y)(0)}= F(x,y)(t)\end{align}
holds for all $t > 0$ and all $(x,y) \in D$.
\end{definition}

An alternative approach would have been to assume that response time is a decreasing function of the underlying absolute utility difference $|v(x,y)|$ between the two options (as opposed to the realized, noisy ones). A drawback of this approach is that response times would be predicted to be deterministic, in contradiction to all available evidence. Hence a second source of noise would have to be introduced, for instance by making the chronometric function stochastic. Without additional, {\em ad hoc} assumptions, any such model would predict that the conditional distributions of response times for each of the two choices are identical, a further prediction not borne out by the data. Our approach is more parsimonious as it requires only one source of randomness (in utility) to generate both stochastic choices and stochastic response times, and it does not make the implausible prediction of independence between choices and response times. That said, a second source of independent noise (like a stochastic chronometric function) can be introduced without changing our main insights, as we will show in Section \ref{Sec.Specific}.

When studying stochastic choice data, the analyst might be interested in or willing to make specific assumptions about the distribution of random utility, i.e., on the properties of the collection of densities $g$. In doing so, the analyst accepts a restriction to a specific subclass of random utility models. Those might range from symmetry of each $g(x,y)$ to specific functional forms. We say that an SCF, respectively an SCF-RT, is {\em rationalizable} within some class of models if there exists a RUM, respectively a RUM-CF, in that class that rationalizes it.

\begin{definition}
Within a class of models, a rationalizable SCF (SCF-RT) \textit{reveals a preference for $x$ over $y$} if all RUMs (RUM-CFs) in the class that rationalize it satisfy $u(x)\geq u(y)$. It \textit{reveals a strict preference for $x$ over $y$} if all RUMs (RUM-CFs) in the class that rationalize it satisfy $u(x) > u(y)$.
\end{definition}

\section{Revealed Preference}\label{Sec.Reveal}

In this section, we investigate the use of response times for preference revelation. Specifically, we are interested in preference revelation within difference classes of random utility models, and how the addition of response times improves the results.

\subsection{The Unrestricted Case}\label{Subsec.Unrestricted}

The first observation is that, without further restrictions on the utility distributions, and without the use of response times, nothing can be learned from choice probabilities. This is well-known among specialists in stochastic choice theory and hence we do not claim originality.

\begin{proposition}\label{Prop.RUM}
Within the class of all RUMs, a rationalizable SCF reveals no preference between any $x$ and $y$ with $x \ne y$.
\end{proposition}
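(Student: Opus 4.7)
The strategy is to prove a stronger constructive claim: for any SCF $p$ and \emph{any} utility function $u : X \to \mathbb{R}$ whatsoever, one can construct densities $g$ such that $(u,g)$ is a RUM rationalizing $p$. The proposition is then immediate: for any pair $x \neq y$, applying the construction once with $u(x) > u(y)$ and once with $u(x) < u(y)$ exhibits two rationalizing RUMs that disagree on the sign of the utility difference, so neither direction of preference can be forced.

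The engine is a flexible two-parameter family of densities. Given $m \in \mathbb{R}$ and $q \in (0,1)$, consider the piecewise-constant density
\[h_{m,q}(v) = \frac{1-q}{L}\,\mathbf{1}_{[-L,0]}(v) + \frac{q}{R}\,\mathbf{1}_{(0,R]}(v),\]
with $L,R>0$ to be chosen. It integrates to $1$, assigns mass $q$ to $(0,\infty)$, has support equal to the connected interval $[-L,R]$, and has mean $\tfrac{1}{2}(qR-(1-q)L)$. Setting this equal to $m$ gives a single linear equation in two positive unknowns, which for any sufficiently large $L$ admits a positive solution $R$. Thus the family realizes every admissible pair $(m,q)$.

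To build the RUM, fix an arbitrary linear order on $X$, and for each unordered pair take its canonical representative $(x,y) \in C$. If $(x,y) \in D$, set $g(x,y) = h_{u(x)-u(y),\,p(x,y)}$; otherwise set $g(x,y)$ to any member of the family with mean $u(x)-u(y)$. Define $g(y,x)(v) := g(x,y)(-v)$ for the reversed pair; by change of variables this has mean $u(y)-u(x)$ and assigns mass $p(y,x)$ to $(0,\infty)$. Conditions (RUM.1)--(RUM.3) then hold by construction, and $G(x,y)(0) = 1 - p(x,y) = p(y,x)$ delivers rationalization for every $(x,y) \in D$.

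The only potentially delicate step is confirming that $h_{m,q}$ really covers every admissible $(m,q)$. Because $p(x,y) \in (0,1)$ by Definition \ref{defscf}, the degenerate cases $q \in \{0,1\}$ never arise, and one linear constraint on two positive parameters always has solutions; so the construction proceeds unobstructed regardless of the chosen $u$, independently of whatever RUM originally rationalized $p$.
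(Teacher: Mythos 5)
Your proposal is correct and follows essentially the same route as the paper: both reduce the claim to showing that \emph{any} utility function $u$ can be paired with suitably constructed densities to rationalize the given SCF, and both realize this with a two-piece piecewise-constant density whose negative-side mass equals $p(y,x)$ and whose geometry is tuned to hit the mean $u(x)-u(y)$. The only difference is parametrization (the paper fixes the left interval to $[-1,0]$ with height $p(y,x)$ and solves for the right piece, while you leave both interval lengths free and impose one linear constraint), which is immaterial.
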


The intuition for the result is simple. Data on the choice between $x$ and $y$ allows us to learn the value $G(x,y)(0)$, but, without distributional assumptions, this does not tell us whether the expected value $v(x,y)=u(x)-u(y)$ is positive or negative, which is what we are interested in.

A solution to the problem would be to impose the seemingly innocuous assumption of symmetry of the distribution. In that case, $G(x,y)(0) \leq 1/2$ indeed implies $v(x,y)\geq 0$, and $G(x,y)(0) \geq 1/2$ implies $v(x,y) \leq 0$. We will investigate the symmetry assumption, and the scope for response times to improve preference revelation under that assumption, in Section \ref{Subsec.Symmetric}. First, however, we want to illustrate with a simple example that symmetry may sometimes not be an innocuous assumption at all.

\vspace{0.3cm}

\begin{example}\label{Ex.1}
Consider a special case where our abstract set of options $X$ consists of lotteries over monetary outcomes. In particular, consider a choice between two options $x$ and $y$, where $x$ is a lottery that pays $20$ with probability $1/20$ and $0$ with opposite probability $19/20$, and $y$ is a safe option that pays $1$ with probability $1$. The data-generating process is a RUM based on CRRA expected utility with a stochastic risk-aversion parameter $\alpha$ \citep[see][for a recent treatment of random parameter models]{ApesteguiaBallester18}. Given a realized value of $\alpha$, the utility of $x$ is $u(x|\alpha)=(1/20)20^{\alpha}$ and the utility of $y$ is $u(y|\alpha)=1$, so the realized utility difference is $v(x,y|\alpha)=u(x|\alpha)-u(y|\alpha)=20^{\alpha-1} -1$. Suppose the risk-aversion parameter follows a uniform distribution $\alpha \sim U[0.4,1.4]$. Then $x$ will be chosen if $\alpha>1$ and $y$ will be chosen if $\alpha<1$, which yields the SCF $p(x,y)=0.4$ and $p(y,x)=0.6$. The cdf of the entire utility difference distribution can be calculated as
\[G(x,y)(v)=\text{Prob}[v(x,y|\alpha) \leq v]=0.6 + \frac{\ln(v+1)}{\ln 20},\]
for all $v$ in the support $[\underline{v},\overline{v}]$ given by $\underline{v} \approx -0.83$ and $\overline{v} \approx 2.31$. The pdf is
\[g(x,y)(v)=\frac{1}{(v+1)\ln 20},\] which is clearly not symmetric but strictly decreasing in $v$.

If the analyst knew the data-generating process, he would accept this asymmetry as a natural characteristic of the specific environment. Consider, however, an analyst who does not know the data-generating process. Suppose that this analyst erroneously imposes the symmetry assumption when analyzing the SCF with the intention of uncovering the underlying utility. This analyst will correctly deduce $G(x,y)(0)=0.6$, but then, applying symmetry, incorrectly conclude that $v(x,y)$ is strictly negative, i.e., that the SCF reveals a strict preference for $y$ over $x$. The true data-generating process, however, satisfies \[v(x,y)=\int_{0.4}^{1.4} v(x,y|\alpha) d\alpha \approx 0.05 > 0,\] i.e., the average utility of $x$ is strictly larger than the average utility of $y$.

This example is meant to illustrate $(i)$ that symmetry is not always a plausible restriction, in particular when noise is non-linear as in many random parameter models, and $(ii)$ that erroneously making the symmetry assumption can lead to wrong inferences about preferences.
\end{example}

\vspace{0.3cm}

The following result shows that, if response times are available, it may become possible to learn a preference even in the unrestricted class of models. We first introduce the following new concepts. Given two cumulative distribution functions $G$ and $H$ on $\RR^+$ and a constant $q \geq 1$, we say that \textit{$G$ $q$-first-order stochastically dominates $H$} (also written $G$ $q$-FSD $H$) if \[G(t) \leq q \cdot H(t) \text{ for all } t\geq 0.\] If, additionally, the inequality is strict for some $t$, then \textit{$G$ strictly $q$-first-order stochastically dominates $H$} (written $G$ $q$-SFSD $H$). For $q=1$, these concepts coincide with the standard notions of first-order stochastic dominance. They are weaker requirements when $q>1$, and possibly substantially so, because the dominating function $G$ can lie above $H$ to an extent only constrained by the ratio $q$. In particular, $q$-FSD implies $q'$-FSD whenever $q\leq q'$. Furthermore, for any two distributions $G$ and $H$ for which $G(t)/H(t)$ is bounded, we can always find a large enough $q$ such that $G$ $q$-FSD $H$.

\begin{theorem}\label{Thm.RUMRT}
Within the class of all RUM-CFs, a rationalizable SCF-RT reveals a preference for $x$ over $y$ if $F(y,x)$ $q$-FSD $F(x,y)$, and a strict preference if $F(y,x)$ $q$-SFSD $F(x,y)$, for $q=p(x,y)/p(y,x)$.
\end{theorem}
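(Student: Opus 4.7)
The plan is to show that for any RUM-CF $(u,g,r)$ rationalizing the data, the hypothesis implies $v(x,y) := u(x)-u(y) \geq 0$ (respectively $>0$). The key identity to derive is a representation of $v(x,y)$ as an integral of observable response-time quantities weighted by a non-negative measure coming from the chronometric function; the $q$-FSD hypothesis then makes the integrand pointwise non-negative.

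First, by (RUM.1) and splitting the integral at zero, using the substitution $w=-v$ on the negative half together with (RUM.2), I obtain
\[ v(x,y) = \int_0^\infty v\, g(x,y)(v)\, dv \;-\; \int_0^\infty v\, g(y,x)(v)\, dv. \]
Applying the standard tail-integral identity (via Fubini, legitimate since $v(x,y)$ is finite by (RUM.1)) converts each piece into an integral of the survival function:
\[ v(x,y) = \int_0^\infty \bigl[(1-G(x,y)(v)) - (1-G(y,x)(v))\bigr]\, dv. \]

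Second, I change variables via $v=r^{-1}(t)$. The chronometric function $r$ is continuous and strictly decreasing wherever it is positive, with $\lim_{v\to 0}r(v)=\infty$ and $\lim_{v\to\infty}r(v)=0$, so $r^{-1}$ is a well-defined strictly decreasing bijection from $(0,\infty)$ onto $(0,v^*)$, where $v^*=\sup\{v:r(v)>0\}$. The rationalization equation (\ref{ruratsc}), together with its analogue for the pair $(y,x)$ obtained by swapping roles, yields
\[ 1-G(x,y)(r^{-1}(t)) = p(x,y)\, F(x,y)(t), \qquad 1-G(y,x)(r^{-1}(t)) = p(y,x)\, F(y,x)(t). \]
Using the Lebesgue--Stieltjes positive measure $d\mu(t):=-d\,r^{-1}(t)$, the substitution gives
\[ v(x,y) = \int_0^\infty \bigl[p(x,y)\, F(x,y)(t) - p(y,x)\, F(y,x)(t)\bigr]\, d\mu(t). \]

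Third, the hypothesis $F(y,x)(t)\leq \tfrac{p(x,y)}{p(y,x)}F(x,y)(t)$, multiplied by $p(y,x)>0$, is exactly the statement that the bracketed integrand is non-negative for all $t\geq 0$. Since $d\mu$ is a positive measure, the integral is non-negative, so $v(x,y)\geq 0$. For the strict version, any $t_0$ at which the $q$-FSD inequality is strict lies in an open interval of strictness by continuity of the cdfs $F(x,y)$ and $F(y,x)$ (which have densities), and $\mu$ assigns strictly positive mass to every non-empty open subset of $(0,\infty)$ because $r^{-1}$ is strictly decreasing; hence the integral is strictly positive and $v(x,y)>0$.

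The main obstacle I anticipate is a clean handling of the boundary when $v^*<\infty$ (i.e., $r$ reaches zero at a finite utility difference): the rationalization equation forced in the limit $t\to 0^+$ must imply $G(x,y)(v^*)=G(y,x)(v^*)=1$, so the supports of $g(x,y)$ and $g(y,x)$ are contained in $(-\infty,v^*]$ and no probability mass is lost when restricting the survival-integral to $[0,v^*]$; this is what makes the change of variables legitimate. Working in the Lebesgue--Stieltjes framework avoids any need to assume differentiability of $r$, which the definition does not provide.
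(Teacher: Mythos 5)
Your proof is correct, and it reaches the paper's conclusion by a genuinely different packaging of the same core inequality. Both arguments pivot on the identities $1-G(x,y)(r^{-1}(t))=p(x,y)F(x,y)(t)$ and $1-G(y,x)(r^{-1}(t))=p(y,x)F(y,x)(t)$, so that $q$-FSD becomes a pointwise comparison of tails of the utility-difference distribution. From there the paper translates the inequality back into $v$-space as $G(x,y)(-v)\leq 1-G(x,y)(v)$ for all $v\geq 0$, constructs an auxiliary mean-zero symmetric distribution $H$, and concludes via first-order stochastic dominance that $v(x,y)\geq\int v\,dH=0$. You instead derive the closed-form representation
\[
v(x,y)=\int_0^\infty\bigl[p(x,y)F(x,y)(t)-p(y,x)F(y,x)(t)\bigr]\,d\mu(t),\qquad d\mu=-d\bigl(r^{-1}\bigr),
\]
via the tail-integral formula and a monotone change of variables, and read off the sign directly. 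Your route buys a quantitative formula for $v(x,y)$ in terms of observables weighted by a positive measure determined solely by $r$, which makes transparent that only monotonicity of the chronometric function matters and slightly streamlines the boundary bookkeeping (the point $v=0$ is Lebesgue-null, whereas the paper must argue it separately by continuity); you still need, and correctly supply, the same no-atom-at-zero argument for the case where $r$ vanishes at a finite $v^*$. The paper's route buys an interpretable intermediate object: the condition certifies that $G(x,y)$ dominates a mean-zero symmetric benchmark, which connects the criterion to the informal ``verified symmetry'' discussion following the theorem. Your treatment of the strict case (strictness on an open interval by continuity of the cdfs, plus the fact that $\mu$ charges every nonempty open subset of $(0,\infty)$ because $r^{-1}$ is strictly decreasing) is actually slightly more explicit than the paper's.
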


\begin{proof} Let $(u,g,r)$ be any RUM-CF which rationalizes an SCF-RT $(p,f)$, and consider any $(x,y) \in D$. By (\ref{ruratsc}), it holds that \begin{align}\label{eq.pa}1 - G(x,y)(r^{-1}(t))=p(x,y) F(x,y)(t) \end{align}
for all $t > 0$. Since (RUM.2) implies $1 - G(y,x)(v)=G(x,y)(-v)$, for $(y,x) \in D$ we analogously obtain
\begin{align}\label{eq.pb}G(x,y)(-r^{-1}(t))= p(y,x) F(y,x)(t)\end{align}
for all $t > 0$. With the definition of $Q(x,y)(t)=(p(x,y) F(x,y)(t))/(p(y,x) F(y,x)(t))$ we therefore have
\begin{align}\label{eq.p1}
Q(x,y)(t)=\frac{1 - G(x,y)(r^{-1}(t))}{G(x,y)(-r^{-1}(t))}
\end{align}
for all $t>0$.

Now suppose $F(y,x)$ $q$-FSD $F(x,y)$ for $q=p(x,y)/p(y,x)$. This can equivalently be written as $Q(x,y)(t) \geq 1$ for all $t>0$. Hence it follows from (\ref{eq.p1}) that
\[G(x,y)(-r^{-1}(t)) \leq 1- G(x,y)(r^{-1}(t))\]
for all $t>0$. We claim that this implies
\begin{align}\label{eq.p2}G(x,y)(-v) \leq 1- G(x,y)(v)\end{align}
for all $v\geq0$. The inequality follows immediately for any $v$ for which there exists $t>0$ such that $r^{-1}(t)=v$. For $v=0$ it follows from continuity of $G(x,y)(v)$. For any $v$ with $r(v)=0$ it follows because in that case $G(x,y)(v)=1$ and $G(x,y)(-v)=0$, as otherwise the RUM-CF would generate an atom at the response time of zero.

Define a function $H: \mathbb{R} \rightarrow [0,1]$ by \[H(v)= \begin{cases} 1 - G(x,y)(-v) & \text{if } v \geq 0,\\ G(x,y)(v) & \text{if } v < 0. \end{cases}\] Observe that $H$ is the cumulative distribution function for a distribution that is symmetric around zero and continuous except (possibly) for an atom at zero. Hence we have
\begin{align*}\int_{-\infty}^{+\infty} v d H(v) & = \int_{(-\infty,0)} v d H(v) + \int_{(0,+\infty)} v d H(v) \\ & = \int_{-\infty}^{0} v g(x,y)(v) dv + \int_{0}^{+\infty} v g(x,y)(-v) dv \\ & = \int_{-\infty}^{0} v g(x,y)(v) dv - \int_{-\infty}^{0} v g(x,y)(v) dv =0. \end{align*}
Observe furthermore that (\ref{eq.p2}) implies $G(x,y)$ $1$-FSD $H$. Hence we have
\begin{align}\label{eq.p3} v(x,y)=\int_{-\infty}^{+\infty} v dG(x,y)(v) \geq \int_{-\infty}^{+\infty} v dH(v)=0,\end{align} i.e., a revealed preference for $x$ over $y$.

If $F(y,x)$ $q$-SFSD $F(x,y)$ for $q=p(x,y)/p(y,x)$, then (\ref{eq.p2}) is strict for some $v\geq0$. Hence $G(x,y)$ $1$-SFSD $H$ and the inequality in (\ref{eq.p3}) is strict, i.e., a revealed strict preference for $x$ over $y$.\end{proof}

The basic idea behind Theorem \ref{Thm.RUMRT} is that the observable distributions of response times provide information about the unobservable distributions of utilities, based on the chronometric relationship.

To understand the precise condition, assume first that $q=p(x,y)/p(y,x)=1$ for some $(x,y) \in D$, i.e., both options are equally likely to be chosen. Any RUM-CF that rationalizes this choice must satisfy $G(x,y)(0)=1/2$. Furthermore, note that the distribution of $v>0$ generates $F(x,y)$ and the distribution of $v<0$ generates $F(y,x)$. Thus, if we additionally observe that $F(x,y)(t) = F(y,x)(t)$ for all $t \geq 0$, i.e., identical response time distributions for the two options, then we can conclude that the shape of the utility difference distribution must be identical on the positive and on the negative domain. This requires no knowledge of the properties of $r$ beyond monotonicity. Hence we have \textit{verified} that the distribution is symmetric around zero, so its mean $v(x,y)$ is zero. Our theorem indeed implies a revealed preference for $x$ over $y$ and for $y$ over $x$ in this case, which we also call a \textit{revealed indifference between $x$ and $y$}. If, by contrast, we observe that $F(y,x)$ $1$-SFSD $F(x,y)$, i.e., the choice of $y$ is systematically slower than the choice of $x$, we can conclude that the utility difference distribution is asymmetric and takes systematically larger absolute values on the positive than on the negative domain. Hence its mean $v(x,y)$ is strictly larger than zero, which translates into a revealed strict preference for $x$ over $y$. Finally, if we observe $q = p(x,y)/p(y,x)>1$, then to obtain a revealed preference for $x$ over $y$ it is sufficient that the choice of $y$ is not too much faster than the choice of $x$, as captured by our concept of $q$-first-order stochastic dominance. If choice behavior is already indicative of a particular preference, then the response time distributions just need to confirm that the utility difference distribution is not strongly asymmetric in the reverse direction.\footnote{Theorem \ref{Thm.RUMRT} and our further results focus on uncovering the (sign of the) mean of $G(x,y)$, because the mean equals $u(x)-u(y)$ and therefore informs about the (ordinal) preferences represented by $u$, for either normative or positive reasons. Potentially, one could be interested in uncovering also other summary statistics of $G(x,y)$, and our tools may be helpful for that purpose, but the relevance of other statistics is not obvious from the point of view of revealed preference theory.}

In the remainder of the section, we will discuss several implications of Theorem \ref{Thm.RUMRT}. First, it is a frequent empirical observation that, in many decision situations where objectively correct and wrong responses exist, errors are slower than correct responses.\footnote{See \citet{Luce86} for a discussion of the classical evidence. Of course, the picture is complicated if decisions are subject to extraneous impulsive tendencies, as e.g.\ alternative decision processes reflecting underlying biases. For the implications of the latter for response times, see \citet{BayesRT}. See also Section \ref{Subsec.LitSC} for a discussion.} Our result shows that a similar logic holds for preferential choice. While the definition of error and correct response is not obvious ex-ante for preferential choice, ex-post we can call the choice of $y$ a \textit{revealed error} and the choice of $x$ a \textit{revealed correct response} when $x$ is revealed to be strictly preferred over $y$. Translated into this language, it follows immediately that slow choice in the standard first-order stochastic dominance sense indeed always reveals an error, provided that choice probabilities are at least minimally informative.

Second, we have described SCF-RTs by unconditional choice probabilities and conditional response time distributions for each choice. This is the natural extension of SCFs and allowed us to work out the intuition for our result. Alternatively, we could have described the joint distribution over choices and response times by an unconditional response time distribution and conditional choice probabilities for each response time. Let $P(x,y)(t)$ denote the probability of a choice of $x$ over $y$ conditional on choice taking place before time $t$. The ratio of these probabilities can be calculated as \[Q(x,y)(t)=\frac{P(x,y)(t)}{P(y,x)(t)}=\frac{p(x,y)F(x,y)(t)}{p(y,x)F(y,x)(t)}.\] Hence, the condition that $F(y,x)$ $q$-FSD $F(x,y)$ for $q=p(x,y)/p(y,x)$ in Theorem \ref{Thm.RUMRT} is equivalent to \begin{align}\label{Eq.Q}Q(x,y)(t) \geq 1 \text{ for all } t > 0.\end{align} An analogous formulation holds for the strict case. For a revealed preference without distributional assumptions, we can thus also check if $x$ is more likely to be chosen than $y$ \textit{before all times $t$}. The simple requirement $p(x,y) \geq p(y,x)$ obtains as a special case of this in the limit as $t \rightarrow \infty$.

The formulation based on $Q(x,y)(t)$ suggests a natural stronger condition. Let $p(x,y)(t)$ denote the probability of a choice of $x$ over $y$ conditional on choice taking place at time $t$ (rather than before $t$). We obtain the ratio \[q(x,y)(t)=\frac{p(x,y)(t)}{p(y,x)(t)}=\frac{p(x,y)f(x,y)(t)}{p(y,x)f(y,x)(t)},\] and can state the following corollary to Theorem \ref{Thm.RUMRT}.

\begin{corollary}\label{Cor.RUMRT1}
Within the class of all RUM-CFs, a rationalizable SCF-RT reveals a preference for $x$ over $y$ if $q(x,y)(t) \geq 1$ for almost all $t \geq 0$, and a strict preference if, additionally, the inequality is strict for a set of $t$ with positive Lebesgue measure.
\end{corollary}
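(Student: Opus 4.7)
The plan is to derive Corollary~\ref{Cor.RUMRT1} as a direct consequence of Theorem~\ref{Thm.RUMRT} by showing that the pointwise condition on $q(x,y)(t)$ implies the cumulative $q$-FSD condition on the conditional response-time distributions. In this way no additional argument about RUM-CFs themselves is needed, since Theorem~\ref{Thm.RUMRT} has already done the heavy lifting of translating a dominance statement about $F$ into a preference statement about $u$.

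First I would rewrite the hypothesis in terms of densities. By the definition of $q(x,y)(t)$, the inequality $q(x,y)(t)\geq 1$ for almost every $t\geq 0$ is equivalent to
\[ p(x,y)\, f(x,y)(t) \;\geq\; p(y,x)\, f(y,x)(t) \quad \text{for a.e. } t\geq 0. \]
Integrating both sides over $[0,t]$ and using $F(x,y)(t)=\int_0^t f(x,y)(s)\,ds$ (and the analogous identity for $F(y,x)$) yields
\[ p(x,y)\, F(x,y)(t) \;\geq\; p(y,x)\, F(y,x)(t) \quad \text{for all } t\geq 0. \]
Dividing by $p(y,x)>0$ gives $F(y,x)(t)\leq q\cdot F(x,y)(t)$ for all $t\geq 0$ with $q=p(x,y)/p(y,x)$, which is precisely the $q$-FSD hypothesis of Theorem~\ref{Thm.RUMRT}. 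Applying that theorem produces the revealed (weak) preference for $x$ over $y$.

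For the strict part, suppose additionally that $q(x,y)(t)>1$ on a measurable set $A\subseteq[0,\infty)$ of positive Lebesgue measure. By continuity of measure from below, there exists some $t_0>0$ with $|A\cap[0,t_0]|>0$. For any $t\geq t_0$, the identity
\[ p(x,y) F(x,y)(t) - p(y,x) F(y,x)(t) = \int_0^t \bigl[p(x,y) f(x,y)(s) - p(y,x) f(y,x)(s)\bigr]\,ds \]
combines a nonnegative integrand (by the weak hypothesis, a.e.) with a strictly positive integrand on $A\cap[0,t_0]$, so the integral is strictly positive. Hence $F(y,x)(t)<q\cdot F(x,y)(t)$ for all $t\geq t_0$, which establishes $F(y,x)$ $q$-SFSD $F(x,y)$. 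Theorem~\ref{Thm.RUMRT} then delivers the revealed strict preference.

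The only step needing any care is this last one, i.e., verifying that strict inequality of the densities on a positive-measure set lifts to a strict inequality of at least one cdf value; once that is phrased cleanly via the choice of $t_0$ from continuity of measure, the corollary reduces to a routine integration of the hypothesis of Theorem~\ref{Thm.RUMRT}.
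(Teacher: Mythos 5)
Your proof is correct and follows essentially the same route as the paper's: rewrite the pointwise density condition, integrate over $[0,t]$ to obtain the cumulative inequality $p(x,y)F(x,y)(t)\geq p(y,x)F(y,x)(t)$, and invoke Theorem \ref{Thm.RUMRT}. The paper dispatches the strict case with ``the argument is analogous,'' whereas you spell out the measure-theoretic detail that a strictly positive integrand on a positive-measure subset of some $[0,t_0]$ forces a strict cdf inequality; that elaboration is accurate and harmless.
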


The condition that $x$ is more likely to be chosen than $y$ \textit{at almost all times $t$} can be interpreted as a requirement of stochastic consistency across all response times. This is clearly stronger than (\ref{Eq.Q}). Appendix \ref{App.Additional} contains an example in which $p(x,y)(t) < p(y,x)(t)$ holds for an interval of response times, so Corollary \ref{Cor.RUMRT1} is not applicable, but a strict preference for $x$ over $y$ is still revealed by Theorem \ref{Thm.RUMRT}. Hence our main criterion arrives at a conclusion even though behavior displays stochastic inconsistency across response times.\footnote{Notice a similarity to \citet{BernheimRangel09}, who require agreement of choices across choice sets or frames to obtain a revealed preference. A first difference is that we study stochastic choice and contemplate probabilistic agreement of choices across response times. A second difference is that our main criterion can reveal a preference even if there is no such agreement.}

Finally, the result in Theorem \ref{Thm.RUMRT} can be extended by completing revealed preferences in a transitive way. Collect in a binary relation $R^{rt}$ on $X$ all the preferences that are directly revealed, by defining
\[(x,y) \in R^{rt} \; \Leftrightarrow \; F(y,x) \; q\text{-FSD} \; F(x,y) \; \text{for} \; q=\frac{p(x,y)}{p(y,x)}, \; \text{or} \; x=y.\]
For any binary relation $R$ on $X$, denote by $T(R)$ the transitive closure of $R$, i.e., $(x,y) \in T(R)$ if and only if there exists a sequence $x_1,x_2,\ldots,x_n$ of any length $n \geq 2$ with $x_1=x$, $x_n=y$ and $(x_k,x_{k+1}) \in R$ for all $k=1,\ldots,n-1$. Let $P$ denote the asymmetric part of $R$, and $T_P(R)$ the asymmetric part of $T(R)$. With this notation, we obtain the following result.

\begin{corollary}\label{Cor.RUMRT2}
Within the class of all RUM-CFs, a rationalizable SCF-RT reveals a preference for $x$ over $y$ if $(x,y) \in T(R^{rt})$, and a strict preference if $(x,y) \in T_P(R^{rt})$.
\end{corollary}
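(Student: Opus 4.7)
The plan is to apply Theorem \ref{Thm.RUMRT} at each individual link of a witnessing chain and then combine the resulting inequalities via the transitivity of the usual order on $\mathbb{R}$. Fix an arbitrary rationalizing RUM-CF $(u,g,r)$. For the weak statement, suppose $(x,y)\in T(R^{rt})$ and take a sequence $x=x_1,\ldots,x_n=y$ with $(x_k,x_{k+1})\in R^{rt}$ for each $k=1,\ldots,n-1$. By the definition of $R^{rt}$, each link either has $x_k=x_{k+1}$ (so trivially $u(x_k)=u(x_{k+1})$) or it satisfies $F(x_{k+1},x_k)$ $q_k$-FSD $F(x_k,x_{k+1})$ with $q_k=p(x_k,x_{k+1})/p(x_{k+1},x_k)$, in which case Theorem \ref{Thm.RUMRT} delivers $u(x_k)\geq u(x_{k+1})$. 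Chaining these weak inequalities yields $u(x)\geq u(y)$, and since $(u,g,r)$ was arbitrary this is the required revealed preference.

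For the strict case I would start from $(x,y)\in T_P(R^{rt})$ and again fix a chain as above. The pivotal step is to identify at least one link at which the SFSD (not merely FSD) hypothesis of Theorem \ref{Thm.RUMRT} is satisfied. Such a link must exist: if for every index $k$ we also had $(x_{k+1},x_k)\in R^{rt}$, then reversing the order of the chain would witness $(y,x)\in T(R^{rt})$, contradicting $(x,y)\in T_P(R^{rt})$. So pick some $k$ with $(x_k,x_{k+1})\in R^{rt}$, $(x_{k+1},x_k)\notin R^{rt}$, and $x_k\neq x_{k+1}$. The failure of the reverse relation unpacks, by the definition of $R^{rt}$, into the existence of some $t^*\geq 0$ at which $F(x_k,x_{k+1})(t^*) > (1/q_k)\, F(x_{k+1},x_k)(t^*)$, i.e.\ $F(x_{k+1},x_k)(t^*)<q_k\,F(x_k,x_{k+1})(t^*)$. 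Combined with the weak $q_k$-FSD inequality known for all $t\geq 0$, this is exactly $q_k$-SFSD, so the strict part of Theorem \ref{Thm.RUMRT} gives $u(x_k)>u(x_{k+1})$. Together with the weak inequalities $u(x_j)\geq u(x_{j+1})$ at the remaining links, this produces $u(x)>u(y)$.

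The main technical hurdle is this single upgrade from a weak to a strict pointwise inequality on the response-time cdfs: the argument has to translate a purely set-theoretic failure of membership in $R^{rt}$ into a concrete time $t^*$ at which the cdf inequality is strict, and then recognize that this is precisely the SFSD hypothesis fed into Theorem \ref{Thm.RUMRT}. The rest of the proof is bookkeeping. A minor convention check is needed when $q_k=1$ (i.e.\ $p(x_k,x_{k+1})=1/2$), but the same mechanism applies: the failure of the reverse relation still supplies a pointwise strict inequality at some $t^*$, which is exactly what the SFSD definition requires.
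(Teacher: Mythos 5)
Your proposal is correct and follows essentially the same route as the paper's proof: apply Theorem \ref{Thm.RUMRT} link by link along a witnessing chain and sum the utility-difference inequalities, and for the strict part locate a link in the asymmetric part $P^{rt}$, unpack the failure of the reverse $R^{rt}$ membership into a pointwise strict cdf inequality at some $t^*$, and recognize this together with the weak $q$-FSD bound as exactly the $q$-SFSD hypothesis. Nothing further is needed.
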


The results in this section are interesting for two main reasons. First, for the analyst who is reluctant to make distributional assumptions in the context of random utility models, Theorem \ref{Thm.RUMRT} provides a robust criterion for preference revelation. The criterion may lead to an incomplete revelation of preferences (we will return to this issue in Section \ref{Sec.Specific}), but it avoids making mistakes like those illustrated in Example \ref{Ex.1}. Second, our criterion may be able to arbitrate if choice behavior violates stochastic transitivity \citep{Tversky69,RieskampBusemeyerMellers06,Tsetsosetal16}. For example, assume we observe a stochastic choice cycle with $p(x,y)>1/2$, $p(y,z)>1/2$, and $p(z,x)>1/2$. Such a cycle (and the associated response times) cannot be rationalized by any model with symmetric utility distributions, but it may be rationalizable by a model with asymmetric utility distributions. In that case, at most two of the three binary choices can reveal a preference, so that the remaining binary choice would be revealed to be misleading by transitivity. A similar argument applies if choices are affected by framing and we observe $p^f(x,y)>p^f(y,x)$ under frame $f$ but $p^{f'}(x,y)<p^{f'}(y,x)$ under frame $f'$. Again, our response time criterion may be able to detect which frame induces choices that are probabilistically more in line with the true preferences.

\subsection{The Symmetric Case}\label{Subsec.Symmetric}

The assumption of symmetry is often accepted in the literature. Formally, a RUM $(u,g)$ or RUM-CF $(u,g,r)$ is {\em symmetric} if each density $g(x,y)$ is symmetric around its mean $v(x,y)$, that is, if for each $(x,y) \in C$ and all $\delta \geq 0$, \[g(x,y)(v(x,y) + \delta)=g(x,y)(v(x,y) - \delta).\] In contrast to Proposition \ref{Prop.RUM}, this assumption allows to learn preferences from observed choice probabilities.

\begin{proposition}\label{Prop.SRUM}
Within the class of symmetric RUMs, a rationalizable SCF reveals a preference for $x$ over $y$ if $p(x,y) \geq p(y,x),$ and a strict preference if $p(x,y) > p(y,x)$.
\end{proposition}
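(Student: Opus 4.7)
The plan is to show, for any symmetric RUM $(u,g)$ rationalizing $p$ and any $(x,y)\in D$, that the sign of $v(x,y)=u(x)-u(y)$ is pinned down by whether $G(x,y)(0)$ lies above, at, or below $1/2$. Combined with the rationalization identity $G(x,y)(0)=p(y,x)=1-p(x,y)$ from Definition \ref{defratrum}, this directly ties the sign of $v(x,y)$ to the comparison between $p(x,y)$ and $p(y,x)$. Since this conclusion must hold for every symmetric RUM rationalizing $p$, both statements of the proposition follow by the definition of revealed (strict) preference.

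The technical core is the mean-equals-median identity for continuous symmetric distributions. Translating by the mean, $W=V-v(x,y)$ has a density symmetric about zero, so $P(W\leq 0)=1/2$ and hence $G(x,y)(v(x,y))=1/2$. Monotonicity of $G(x,y)$ immediately gives the weak direction: $v(x,y)\geq 0$ implies $G(x,y)(0)\leq G(x,y)(v(x,y))=1/2$. For the strict direction, when $v(x,y)>0$ either the lower endpoint of the support is nonnegative, in which case $G(x,y)(0)=0<1/2$ trivially, or $(0,v(x,y))$ is a nonempty open subinterval of the support. The connected-support assumption (RUM.3) guarantees that any such open subinterval of the support carries strictly positive mass, so $G(x,y)(0)<1/2$. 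The case $v(x,y)<0$ reduces to $v(y,x)>0$ via (RUM.2), which makes $g(y,x)$ symmetric about $-v(x,y)>0$ with connected support; applying the previous argument to $(y,x)$ yields $G(y,x)(0)<1/2$ and hence $G(x,y)(0)>1/2$.

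Reading this off: $p(x,y)\geq p(y,x)$ is equivalent to $G(x,y)(0)\leq 1/2$, which excludes $v(x,y)<0$ by the strict claim just established, forcing $v(x,y)\geq 0$, i.e.\ $u(x)\geq u(y)$. Likewise $p(x,y)>p(y,x)$ is equivalent to $G(x,y)(0)<1/2$, which excludes $v(x,y)\leq 0$ and hence forces $u(x)>u(y)$.

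The main obstacle is the strict half of the mean-versus-median claim, where (RUM.3) is doing essential work. Without it, a symmetric density supported on two disjoint intervals placed symmetrically around a positive mean could satisfy $v(x,y)>0$ simultaneously with $G(x,y)(0)=1/2$ (take uniform halves on $[-\varepsilon,0]$ and $[2v(x,y),2v(x,y)+\varepsilon]$), and then $p(x,y)>p(y,x)$ would warrant only a weak, not a strict, revealed preference.
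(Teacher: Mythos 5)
Your proof is correct and follows essentially the same route as the paper's: symmetry gives $G(x,y)(v(x,y))=1/2$, monotonicity of $G(x,y)$ pins down the weak direction, and (RUM.3) delivers strictness via positive mass on open subintervals of the connected support. You merely present the implications in contrapositive form and with the support case-analysis spelled out more explicitly, which changes nothing of substance.
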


This result is both simple and well-known, and we include a proof in the appendix only for completeness.\footnote{Its first statement in the economics literature that we are aware of is \citet{Manski77}, but an earlier, closely related statement for general stochastic choice can be found already in \citet{Edwards54}.} The result can again be extended by completing revealed preferences in a transitive way. Define a binary relation $R^{s}$ on $X$ by
\[(x,y) \in R^{s} \; \Leftrightarrow \; p(x,y) \geq p(y,x), \; \text{or} \; x=y.\]

\begin{corollary}\label{Cor.SRUM}
Within the class of symmetric RUMs, a rationalizable SCF reveals a preference for $x$ over $y$ if $(x,y) \in T(R^{s})$, and a strict preference if $(x,y) \in T_P(R^{s})$.
\end{corollary}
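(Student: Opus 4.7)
The plan is to deduce Corollary \ref{Cor.SRUM} directly from Proposition \ref{Prop.SRUM} together with the transitivity of the usual order on $\mathbb{R}$. Fix a symmetric RUM $(u,g)$ that rationalizes the given SCF.

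For the first statement, suppose $(x,y)\in T(R^s)$, so there is a chain $x=x_1,x_2,\dots,x_n=y$ with $(x_k,x_{k+1})\in R^s$ for every $k$. For each $k$, either $x_k=x_{k+1}$, in which case $u(x_k)=u(x_{k+1})$ trivially, or $p(x_k,x_{k+1})\geq p(x_{k+1},x_k)$, in which case Proposition \ref{Prop.SRUM} yields $u(x_k)\geq u(x_{k+1})$. Chaining these inequalities gives $u(x)\geq u(y)$, which is exactly the revealed (weak) preference for $x$ over $y$.

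For the second statement, suppose $(x,y)\in T_P(R^s)$, i.e., $(x,y)\in T(R^s)$ but $(y,x)\notin T(R^s)$. By the previous paragraph we already have $u(x)\geq u(y)$. I will argue by contradiction that the inequality is strict. Suppose $u(x)=u(y)$; then, along the chain above, the weak inequalities $u(x_1)\geq u(x_2)\geq\cdots\geq u(x_n)$ must all be equalities. The key observation, which is the one real substantive step, is that in a symmetric RUM, $u(x_k)=u(x_{k+1})$ forces $p(x_k,x_{k+1})=p(x_{k+1},x_k)=1/2$: since $g(x_k,x_{k+1})$ is symmetric around its mean $v(x_k,x_{k+1})=0$, we have $G(x_k,x_{k+1})(0)=1/2$, and rationalization gives the claim. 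Therefore each reversed pair also satisfies $(x_{k+1},x_k)\in R^s$ (with equality of probabilities, or $x_k=x_{k+1}$), so the reversed chain $y=x_n,x_{n-1},\dots,x_1=x$ witnesses $(y,x)\in T(R^s)$, contradicting $(x,y)\in T_P(R^s)$. Hence $u(x)>u(y)$, and since this holds for every rationalizing symmetric RUM, a strict preference for $x$ over $y$ is revealed.

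The only step beyond elementary chain-manipulation is this symmetric-RUM fact that ties $u(x_k)=u(x_{k+1})$ to a data-level equality $p(x_k,x_{k+1})=1/2$; I expect this to be the main obstacle only in the sense that it needs to be invoked carefully, and it rests precisely on the symmetry assumption used already in Proposition \ref{Prop.SRUM}. Everything else is bookkeeping about transitive closures and their asymmetric parts.
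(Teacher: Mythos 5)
Your proof is correct and follows essentially the same route as the paper's (which is stated to be analogous to the proof of Corollary \ref{Cor.RUMRT2}): decompose a witnessing chain, apply Proposition \ref{Prop.SRUM} linkwise, and sum the utility differences $v(x_k,x_{k+1})$. The only cosmetic difference is in the strict case, where the paper locates a link $(x_{k^*},x_{k^*+1})$ in the asymmetric part $P^s$ and invokes the strict half of Proposition \ref{Prop.SRUM} directly, whereas you argue contrapositively that $u(x)=u(y)$ would force every link to satisfy $p(x_k,x_{k+1})=1/2$ and hence reverse, yielding $(y,x)\in T(R^s)$; both hinge on the same underlying fact that, under symmetry, $v(x_k,x_{k+1})=0$ corresponds to equal choice probabilities.
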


Note that relation $R^s$ is always more complete than relation $R^{rt}$, that is, $x R^{rt} y$ implies $x R^s y$. This means that every preference which can be learned with the help of response times without distributional assumptions can also be learned without response times at the price of making the (possibly unwarranted) symmetry assumption.\footnote{This statement holds for weak but not necessarily for strict preferences. We can indeed have $x P^{rt} y$ but $y R^{s} x$, in case $p(x,y)=p(y,x)=1/2$ and $F(y,x)$ $1$-SFSD $F(x,y)$. Any symmetric RUM that rationalizes such an SCF must have $v(x,y)=0$. However, there is no symmetric RUM-CF that rationalizes the SCF-RT, due to the asymmetric response times.  All rationalizing RUM-CFs must be asymmetric and have $v(x,y)>0$.} But even if one is willing to make the symmetry assumption, the addition of response times again improves what can be learned about preferences, as the following result will show. It is based on triangulating a preference indirectly through comparisons with a third option. For each $(x,y)\in D$ with $p(x,y)>p(y,x)$ define $t(x,y)$ as the $1/q$-percentile of the response time distribution of $x$, for $q= p(x,y)/p(y,x)$ as before, i.e.,
\[F(x,y)(t(x,y))=\frac{p(y,x)}{p(x,y)}.\]
The percentile $t(x,y)>0$ combines information about choice probabilities and response times. It becomes smaller as $p(x,y)$ becomes larger or as the choice of $x$ becomes faster in the usual first-order stochastic dominance sense. Hence a small value of $t(x,y)$ is indicative of a strong preference for $x$ over $y$. Comparison of these percentiles can make it possible to learn preferences for unobserved pairs $(x,y) \in C \setminus D$ even when transitivity is void.

\begin{theorem}\label{Thm.SRUMRT}
Within the class of symmetric RUM-CFs, a rationalizable SCF-RT reveals a preference for $x$ over $y$, where $(x,y) \in C \setminus D$, if there exists $z\in X$ such that $t(x,z) \leq t(y,z)$ or $t(z,x) \geq t(z,y)$, and a strict preference if $t(x,z) < t(y,z)$ or $t(z,x) > t(z,y)$.
\end{theorem}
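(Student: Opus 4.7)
The strategy is to derive a clean closed-form expression for the percentile $t(x,z)$ in terms of primitives of the symmetric RUM-CF, and then read off the preference comparison from the monotonicity of the chronometric function. Specifically, the plan is to show that for any $(x,z) \in D$ with $p(x,z) > p(z,x)$, every rationalizing symmetric RUM-CF $(u,g,r)$ must satisfy the identity
\[
t(x,z) = r\bigl(2(u(x) - u(z))\bigr).
\]
Once this is established, the conclusion of the theorem follows directly from the fact that $r$ is strictly decreasing on the set where it is positive.

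First I would fix a rationalizing symmetric RUM-CF $(u,g,r)$ and pick $(x,z) \in D$ with $p(x,z) > p(z,x)$, so that $t(x,z)$ is defined. By Proposition \ref{Prop.SRUM} together with symmetry of $g(x,z)$ around $v(x,z)$, we have $v(x,z) > 0$ and $G(x,z)(0) = p(z,x) < 1/2$. Symmetry lets me write $G(x,z)(v) = K(v - v(x,z))$ for some cdf $K$ symmetric around zero, so that $K(v(x,z)) = p(x,z)$. The rationalization equation (\ref{ruratsc}) evaluated at $t = t(x,z)$ then reads
\[
\frac{1 - K\bigl(r^{-1}(t(x,z)) - v(x,z)\bigr)}{p(x,z)} = F(x,z)(t(x,z)) = \frac{p(z,x)}{p(x,z)},
\]
which simplifies to $K(r^{-1}(t(x,z)) - v(x,z)) = K(v(x,z))$. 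Because $g(x,z)$ has connected support and is strictly positive there (and $v(x,z)$ lies in that support), $K$ is strictly increasing at $v(x,z)$, forcing $r^{-1}(t(x,z)) = 2 v(x,z)$, which gives the desired identity. An analogous derivation establishes $t(z,x) = r(2(u(z) - u(x)))$ whenever it is defined.

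With the identity in hand, the two cases of the theorem are handled symmetrically. If $t(x,z) \leq t(y,z)$, then both percentiles are defined, so both $u(x) > u(z)$ and $u(y) > u(z)$, and we have $r(2(u(x) - u(z))) \leq r(2(u(y) - u(z)))$ with both values positive; strict monotonicity of $r$ on the positivity set yields $u(x) - u(z) \geq u(y) - u(z)$, hence $u(x) \geq u(y)$, with strict inequality preserved under strict hypothesis. The case $t(z,x) \geq t(z,y)$ is analogous: here both $u(z) > u(x)$ and $u(z) > u(y)$, and the inequality $r(2(u(z) - u(x))) \geq r(2(u(z) - u(y)))$ delivers $u(z) - u(x) \leq u(z) - u(y)$, again yielding $u(x) \geq u(y)$.

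The only genuine subtlety, rather than a true obstacle, is verifying that the percentile $t(x,z)$ always lies strictly inside the positivity range of $r$ so that $r^{-1}(t(x,z))$ is unambiguously defined and equals $2v(x,z)$. This is handled by observing that $p(z,x) > 0$ forces $F(x,z)(t(x,z)) > 0$, which rules out $t(x,z) = 0$, together with the fact that strict positivity and connected support of $g(x,z)$ make $K$ strictly monotone on its support so the inversion step is unique. Everything else is a direct application of the chronometric monotonicity.
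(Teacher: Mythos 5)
Your proposal is correct and follows essentially the same route as the paper's proof: both hinge on establishing the identity $t(x,z)=r(2v(x,z))$ from the rationalization equation plus symmetry, and then invoke strict monotonicity of $r$ on its positivity set. The only cosmetic difference is direction — the paper evaluates $F(x,z)$ at $t=r(2v(x,z))$ and matches it to the defining percentile, whereas you start from the percentile and invert $K$ — but this changes nothing of substance.
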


\begin{proof} Let $(u,g,r)$ be any symmetric RUM-CF which rationalizes an SCF-RT $(p,f)$. We first claim that, for any $(x,y)\in D$ with $p(x,y) > p(y,x)$, it holds that $t(x,y)=r(2v(x,y))$. To see the claim, note that by rationalization and symmetry,
\begin{align}\label{symcon1} p(y,x)=G(x,y)(0)=1-G(x,y)(2v(x,y)).\end{align}
From (\ref{ruratsc}) we obtain 
\[p(x,y)F(x,y)(t)=1-G(x,y)(r^{-1}(t))\]
for all $t > 0$. Evaluated at $t=r(2v(x,y))$, which is well-defined because $v(x,y)>0$ by Proposition \ref{Prop.SRUM}, this yields 
\begin{align}\label{symcon2} p(x,y)F(x,y)(r(2v(x,y)))=1-G(x,y)(2v(x,y)).\end{align}
Combining (\ref{symcon1}) and (\ref{symcon2}) yields
\[F(x,y)(r(2v(x,y)))=\frac{p(y,x)}{p(x,y)},\]
and, by definition of $t(x,y)$, it follows that $t(x,y)=r(2v(x,y))$, proving the claim.

Consider now any $(x,y)\in C \setminus D$ for which there exists $z\in X$ with $t(x,z) \leq t(y,z)$, and hence $0 < r(2v(x,z)) \leq r(2v(y,z))$ by the above claim. Since $r$ is strictly decreasing in $v$ whenever $r(v)>0$, it follows that $u(x)-u(z)=v(x,z)\geq v(y,z)=u(y)-u(z)$ and hence $v(x,y)=u(x)-u(y)\geq 0$, i.e., a revealed preference for $x$ over $y$. If $t(x,z) < t(y,z)$, all the inequalities must be strict, so the revealed preference is strict. The case where $t(z,x) \geq t(z,y)$ or $t(z,x) > t(z,y)$ is analogous.\end{proof}

It has been observed before that response times can be used to infer preferences for unobserved choices. \citet{KrajbichOudFehr14} argue that a slow choice of $z$ over $x$ combined with a quick choice of the same $z$ over $y$ reveals a preference for $x$ over $y$, even though the choice between $x$ and $y$ is not directly observed and a transitivity argument is not applicable. Based on the chronometric relationship, the positive utility difference $u(z)-u(x)$ must be smaller than the positive utility difference $u(z)-u(y)$, which implies $u(x)>u(y)$. It remained unclear how to generalize the idea to a stochastic framework. Our Theorem \ref{Thm.SRUMRT} answers this question. The condition $t(z,x) \geq t(z,y)$ is the appropriate formulation of a stochastic choice of $z$ over $x$ being slower than a stochastic choice of $z$ over $y$. Of course, an analogous argument applies to a quick choice of $x$ over $z$ combined with a slow choice of $y$ over $z$, as formalized by our alternative condition $t(x,z) \leq t(y,z)$. Importantly, we need to compare specific percentiles of the response time distributions that depend on choice probabilities, and not, for example, just mean or maximum response times. 

Not too surprisingly, also the preferences revealed by Theorem \ref{Thm.SRUMRT} can be completed in a transitive way. We will state this as part of a more general corollary. Define a binary relation $R^{srt}$ on $X$ by \[ (x,y) \in R^{srt} \;\; \Leftrightarrow \;\; (x,y) \in C \backslash D \text{ and } \exists z \in X \text{ with } t(x,z) \leq t(y,z) \text{ or } t(z,x) \geq t(z,y).\]

\begin{corollary}\label{Cor.SRUMRT}
Within the class of symmetric RUM-CFs, a rationalizable SCF-RT reveals a preference for $x$ over $y$ if $(x,y) \in T(R^s \cup R^{srt})$, and a strict preference if $(x,y) \in T_P(R^s \cup R^{srt})$.
\end{corollary}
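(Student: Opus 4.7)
My plan is to leverage the fact that the weak-inequality relation on $\mathbb{R}$ induced by $u$ is transitive, together with the fact that both Proposition~\ref{Prop.SRUM} and Theorem~\ref{Thm.SRUMRT} have already produced utility inequalities from single-step relations. Fix any symmetric RUM-CF $(u,g,r)$ that rationalizes $(p,f)$. If $(a,b)\in R^s$, Proposition~\ref{Prop.SRUM} yields $u(a)\geq u(b)$; if $(a,b)\in R^{srt}$, Theorem~\ref{Thm.SRUMRT} yields $u(a)\geq u(b)$. Hence any chain $x=x_1,\dots,x_n=y$ in $R^s\cup R^{srt}$ telescopes to $u(x)\geq u(y)$, which establishes the first (weak) claim of the corollary.

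For the strict claim I would argue by contradiction. Suppose $(x,y)\in T_P(R^s\cup R^{srt})$ but some rationalizing symmetric RUM-CF satisfies $u(x)=u(y)$. Then in the witnessing chain all weak inequalities collapse: $u(x_k)=u(x_{k+1})$ for every $k$. The strategy is to reverse each link of the chain, producing a chain from $y$ back to $x$ in $R^s\cup R^{srt}$ and thereby contradicting $(y,x)\notin T(R^s\cup R^{srt})$, which is built into the definition of $T_P$.

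The reversal of an $R^s$-link is immediate: $u(x_k)=u(x_{k+1})$ together with symmetry and rationalization force $G(x_k,x_{k+1})(0)=1/2$, so $p(x_k,x_{k+1})=p(x_{k+1},x_k)=1/2$ and $(x_{k+1},x_k)\in R^s$. The reversal of an $R^{srt}$-link is the part I expect to be the main obstacle, because it requires the percentiles $t(\cdot,\cdot)$ to remain defined after swapping the arguments. Here I would invoke the identity $t(a,b)=r(2v(a,b))$ established inside the proof of Theorem~\ref{Thm.SRUMRT}. If the witness $z$ for $(x_k,x_{k+1})\in R^{srt}$ satisfies, say, $t(x_k,z)\leq t(x_{k+1},z)$, then both percentiles are defined, so in particular $u(x_k)>u(z)$; combined with $u(x_k)=u(x_{k+1})$ this also gives $u(x_{k+1})>u(z)$, so $t(x_{k+1},z)$ is defined and by the identity equals $t(x_k,z)$. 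The reverse inequality $t(x_{k+1},z)\leq t(x_k,z)$ therefore also holds, so $(x_{k+1},x_k)\in R^{srt}$; the analogous argument handles the other clause of the definition of $R^{srt}$. Reassembling the reversed links produces the forbidden chain from $y$ to $x$, completing the proof.
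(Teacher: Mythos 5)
Your proof is correct. The weak half is exactly the paper's argument: telescope the single-link inequalities supplied by Proposition~\ref{Prop.SRUM} and Theorem~\ref{Thm.SRUMRT} along the chain. For the strict half, however, you take a genuinely different (contrapositive) route. The paper argues directly: since $(y,x)\notin T(R^s\cup R^{srt})$, some link $(x_{k^*},x_{k^*+1})$ of the chain must lie in $P^s$ or $P^{srt}$, and the \emph{strict} clauses of Proposition~\ref{Prop.SRUM} and Theorem~\ref{Thm.SRUMRT} then give $v(x_{k^*},x_{k^*+1})>0$, which propagates through the telescoping sum. You instead assume $u(x)=u(y)$, collapse every link to a utility equality, and show that every link reverses in the data, contradicting the asymmetry built into $T_P$. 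Both work, but the paper's version is shorter and purely modular---it consumes only the stated conclusions of the two earlier results---whereas yours must reach inside the proof of Theorem~\ref{Thm.SRUMRT} for the identity $t(a,b)=r(2v(a,b))$ in order to reverse the $R^{srt}$ links. (You could avoid that by using the contrapositive of the strict clause of Theorem~\ref{Thm.SRUMRT} directly: $u(x_k)\leq u(x_{k+1})$ rules out $t(x_k,z)<t(x_{k+1},z)$, which together with the witness inequality forces equality of the two percentiles.) Two small remarks: your concern about whether $t(x_{k+1},z)$ remains defined after the swap is moot, because the witness clause $t(x_k,z)\leq t(x_{k+1},z)$ already presupposes that both percentiles exist; and the reversal also needs $(x_{k+1},x_k)\in C\setminus D$, which follows immediately from the symmetry of $C$ and $D$.
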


The results in this section enable first out-of-sample predictions. Consider an unobserved choice problem $(x,y) \in C \setminus D$. If, based on Corollary \ref{Cor.SRUMRT}, the SCF-RT reveals a strict preference for $x$ over $y$ in the class of symmetric RUM-CFs, then we predict that $p(x,y) > p(y,x)$, because each symmetric model with $v(x,y)>0$ generates such choice probabilities. If the SCF-RT reveals an indifference between $x$ and $y$, then we can even predict the precise probabilities $p(x,y)=p(y,x)=1/2$. Such predictions are easy to test empirically. In the next section, we will show that the predictions can be sharpened under a stronger assumption on the distribution of utilities.

\subsection{The Fechnerian Case}

Microeconometric models of random utility assume even more structure. For instance, the prominent probit or logit models are special cases of {\em Fechnerian} models, which go back to the representation result by \citet{Debreu58}. A RUM $(u,g)$ or RUM-CF $(u,g,r)$ is Fechnerian if there exists a common density $g$ that is symmetric around zero and has full support, i.e., $g(\delta)=g(-\delta)>0$ for all $\delta\geq 0$, such that, for each $(x,y) \in C$ and all $v \in \RR$,
\[g(x,y)(v)=g(v-v(x,y)).\]
In words, the utility difference distribution has the same shape for each pair $(x,y) \in C$ and is just shifted so that its expected value becomes $v(x,y)$. This additional structure makes it possible to deduce preferences through comparison with a third option, relying only on choice probabilities.

\begin{proposition}\label{Prop.FRUM}
Within the class of Fechnerian RUMs, a rationalizable SCF reveals a preference for $x$ over $y$, where $(x,y) \in C \setminus D$,  if there exists $z\in X$ such that $p(x,z) \geq p(y,z)$, and a strict preference if $p(x,z) > p(y,z)$.
\end{proposition}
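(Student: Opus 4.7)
The plan is to exploit the fact that, in a Fechnerian model, the shape of the utility difference distribution is the same across all pairs and only its location shifts with $v(x,y)=u(x)-u(y)$. This means that choice probabilities become a strictly monotone function of the underlying utility difference, and we can recover the ordering of utility differences by merely comparing choice probabilities, even without direct data on the pair $(x,y)$.

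Concretely, I would fix an arbitrary Fechnerian RUM $(u,g)$ rationalizing the SCF and let $G$ denote the common cumulative distribution function corresponding to the symmetric, full-support density $g$. For any pair $(a,b)\in C$, the model specifies $G(a,b)(v)=G(v-v(a,b))$, and since $g$ has full support, $G$ is strictly increasing on all of $\RR$. Because $(x,y)\in C\setminus D$ but the hypothesis involves $p(x,z)$ and $p(y,z)$, I would implicitly assume (as the proposition does) that these two pairs lie in $D$, so that the rationalization condition in Definition \ref{defratrum} applies at both of them.

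The key computation is then to evaluate $G(a,b)(0)$ for $(a,b)=(x,z)$ and $(a,b)=(y,z)$. Rationalization gives $G(-v(x,z))=G(x,z)(0)=p(z,x)=1-p(x,z)$ and analogously $G(-v(y,z))=1-p(y,z)$. Hence $p(x,z)\geq p(y,z)$ is equivalent to $G(-v(x,z))\leq G(-v(y,z))$, and by strict monotonicity of $G$ this yields $-v(x,z)\leq -v(y,z)$, i.e., $v(x,z)\geq v(y,z)$. Subtracting $u(z)$ from both sides gives $u(x)\geq u(y)$. Since the chosen RUM was arbitrary, this establishes a revealed preference for $x$ over $y$. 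The strict case is identical: $p(x,z)>p(y,z)$ implies $G(-v(x,z))<G(-v(y,z))$, and strict monotonicity of $G$ upgrades the conclusion to $u(x)>u(y)$.

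There is not really a hard step here: the whole argument is driven by the fact that the Fechnerian assumption collapses the richness of a family of densities $\{g(x,y)\}_{(x,y)\in C}$ to a single symmetric, full-support shape $g$ whose associated CDF is therefore a strictly increasing bijection from $\RR$ to $(0,1)$. The only point that requires a small amount of care is observing that strict monotonicity (rather than just weak monotonicity) follows from full support, which is what makes the strict-preference version of the statement work. No transitivity or auxiliary third-option argument beyond the single comparison via $z$ is required.
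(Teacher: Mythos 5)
Your proposal is correct and follows essentially the same route as the paper: use the Fechnerian structure to write $G(a,b)(0)$ as the common CDF $G$ evaluated at $\pm v(a,b)$, invoke strict monotonicity of $G$ (from full support) to translate the comparison of choice probabilities at $z$ into a comparison of $v(x,z)$ and $v(y,z)$, and cancel $u(z)$. The only cosmetic remark is that strict monotonicity of $G$ is actually what the \emph{weak}-preference direction needs (to pass from $G(a)\leq G(b)$ to $a\leq b$), while the strict case would survive with mere monotonicity; this does not affect the validity of your argument.
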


As in the case of Proposition \ref{Prop.SRUM}, this result is well-known and we provide a short proof in the appendix only for completeness.\footnote{The argument can be traced back to \citet{Fechner1860} and \citet{Thurstone27}. Within economics, it has been spelled out e.g.\ in \citet{BallingerWilcox97}.} The transitive closure extension can also be easily obtained. Define a binary relation $R^f$ on $X$ by \[(x,y) \in R^f \; \Leftrightarrow \, (x,y) \in C \setminus D \text{ and } \exists z \in X \text{ with } p(x,z) \geq p(y,z).\]
Then, combining Propositions \ref{Prop.SRUM} and \ref{Prop.FRUM} yields the following result.

\begin{corollary} \label{Cor.FRUM} Within the class of Fechnerian RUMs, a rationalizable SCF reveals a preference for $x$ over $y$ if $(x,y) \in T(R^s \cup R^f)$, and a strict preference if $(x,y) \in T_P(R^s \cup R^f)$.
\end{corollary}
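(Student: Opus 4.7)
The plan is to deduce this corollary from Propositions~\ref{Prop.SRUM} and~\ref{Prop.FRUM} by a transitivity argument, analogous to how Corollaries~\ref{Cor.SRUM} and~\ref{Cor.FRUM} extend their respective base propositions. The first step is the (essentially automatic) observation that the class of Fechnerian RUMs is a subclass of the symmetric RUMs: if $g(x,y)(v)=g(v-v(x,y))$ with $g$ symmetric around $0$, then $g(x,y)$ is symmetric around $v(x,y)$. Hence Proposition~\ref{Prop.SRUM} applies within the Fechnerian class, and any $(x,y) \in R^s$ yields $u(x)\geq u(y)$ (strictly, if $p(x,y)>p(y,x)$) for every rationalizing Fechnerian RUM $(u,g)$. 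Proposition~\ref{Prop.FRUM} directly delivers the analogous conclusion for $R^f$.

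For the weak statement, I would let $(x,y) \in T(R^s \cup R^f)$, pick a witnessing chain $x=x_1, x_2, \ldots, x_n=y$ with each $(x_k, x_{k+1}) \in R^s \cup R^f$, apply the previous step to each link to obtain $u(x_k) \geq u(x_{k+1})$, and then chain these weak inequalities by transitivity of $\geq$ on $\mathbb{R}$ to conclude $u(x) \geq u(y)$.

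For the strict statement, let $(x,y) \in T_P(R^s \cup R^f)$, so $(x,y) \in T(R^s \cup R^f)$ but $(y,x) \notin T(R^s \cup R^f)$. The key observation is that any witnessing chain must contain at least one link $(x_k, x_{k+1})$ for which $(x_{k+1}, x_k) \notin R^s \cup R^f$; otherwise the reverse chain would certify $(y,x) \in T(R^s \cup R^f)$, a contradiction. I would then split into two cases for this strict link. If $(x_k, x_{k+1}) \in R^s$, symmetry of $D$ forces $(x_{k+1}, x_k) \in D$, so $(x_{k+1}, x_k) \notin R^s$ means $p(x_k, x_{k+1}) > p(x_{k+1}, x_k)$ and Proposition~\ref{Prop.SRUM} yields the strict inequality $u(x_k) > u(x_{k+1})$. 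If $(x_k, x_{k+1}) \in R^f$ with witness $z$ satisfying $p(x_k,z) \geq p(x_{k+1},z)$, then again $(x_{k+1},x_k) \in C\setminus D$ by symmetry of $D$, and $(x_{k+1}, x_k) \notin R^f$ means that \emph{no} alternative $z'$ satisfies $p(x_{k+1}, z') \geq p(x_k, z')$; applied to the particular $z$ above, this forces $p(x_k, z) > p(x_{k+1}, z)$, and Proposition~\ref{Prop.FRUM} yields $u(x_k) > u(x_{k+1})$. Combining this one strict link with the weak inequalities on the remaining links gives $u(x) > u(y)$.

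The main obstacle is the $R^f$ sub-case in Step~4: because $R^f$ is defined existentially in the witness $z$, one must take care when turning $(x_{k+1},x_k) \notin R^f$ into a usable strict inequality, which is what the universal quantifier over witnesses buys us when applied to the particular $z$ already in hand. Everything else reduces to checking that $R$ pairs from the two definitions interact properly with symmetry of $D$ and with the rationalizing utility's transitivity.
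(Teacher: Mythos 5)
Your proof is correct and follows essentially the same route as the paper, which omits this proof as ``similar to the proof of Corollary~\ref{Cor.SRUMRT}'': chain the links, get $v(x_k,x_{k+1})\geq 0$ from Propositions~\ref{Prop.SRUM} and~\ref{Prop.FRUM}, and for the strict part locate a link whose reverse is not in $R^s\cup R^f$ and upgrade it to a strict inequality. Your careful handling of the existential witness $z$ in the $R^f$ case mirrors exactly how the paper treats the analogous quantifier in $R^{srt}$.
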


Relation $R^f$ contains a statement about an unobserved pair $(x,y) \in C \setminus D$ whenever $(x,z),(y,z) \in D$ for some third option $z$. Hence, imposing the Fechnerian assumption enables an exhaustive elicitation of ordinal preferences even outside the choice data set, without the use of response times (provided that the assumption is valid). We now show that the use of response times makes it possible to move beyond ordinal preferences and make out-of-sample predictions of precise choice probabilities.

\begin{definition} Within a class of models, a rationalizable SCF-RT {\em predicts choice probability $\bar{p}(x,y)$} for a non-observed choice $(x,y) \in C \setminus D$ if all RUM-CFs in the class that rationalize it satisfy $G(y,x)(0)=\bar{p}(x,y)$.\end{definition}

To state the following result, for each $(x,y) \in D$ with $p(x,y)>p(y,x)$ define $\theta(x,y)$ as the $1/q'$-percentile of the response time distribution of $x$, for $q'=2p(x,y)$, i.e.,
\[F(x,y)(\theta(x,y))=\frac{1}{2 p(x,y)}.\]
Notice that $\theta(x,y)>0$ is a measure of preference intensity that displays very similar comparative statics properties to $t(x,y)$.

\begin{theorem}\label{Thm.FRUMRT} Within the class of Fechnerian RUM-CFs, a rationalizable SCF-RT predicts a choice probability for each $(x,y) \in C\setminus D$ for which there exists $z \in X$ with $(x,z), (y,z) \in D$. Assuming $p(x,z) \geq p(y,z)$ w.l.o.g., the prediction is
\[\bar{p}(x,y)= \begin{cases}
p(x,z) F(x,z)(\theta(y,z)) & \text{if } p(y,z)>1/2,\\
p(x,z) & \text{if } p(y,z)=1/2,\\
1 - p(z,x)F(z,x)(\theta(z,y)) & \text{if } p(y,z)<1/2.
\end{cases}\]
\end{theorem}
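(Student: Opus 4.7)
The plan is to exploit the fact that the Fechnerian structure collapses all utility-difference distributions into a single common cdf $G_0$ that is symmetric around zero, so $G(a,b)(v)=G_0(v-v(a,b))$ for every $(a,b)\in C$. Combining this with the rationalization condition (\ref{ruratsc}) yields the master identity
\[
p(a,b)\,F(a,b)(t) \;=\; 1 - G_0\bigl(r^{-1}(t)-v(a,b)\bigr)
\]
for every $(a,b)\in D$ and every $t>0$. Since the target quantity is $\bar p(x,y)=G(y,x)(0)=G_0(v(x,y))$ and the bridging identity $v(x,y)=v(x,z)-v(y,z)$ holds, the whole task reduces to recovering $v(x,z)$ and $v(y,z)$ (or a suitable combination thereof) from observables using pairs $(x,z),(y,z)\in D$.

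The key step, which I expect to be the crux of the argument, is to show that the percentile $\theta(a,b)$ inverts $r$ exactly at $v(a,b)$. I would substitute $t=\theta(a,b)$ into the master identity: by the definition of $\theta(a,b)$ the left-hand side equals $1/2$, hence $G_0(r^{-1}(\theta(a,b))-v(a,b))=1/2$. Because $G_0$ is symmetric around zero with full support, it is strictly increasing and attains the value $1/2$ only at the origin, so $r^{-1}(\theta(a,b))=v(a,b)$. This is the technical heart of the theorem: it turns an observable percentile of response times into an exact readout of a latent utility difference, and crucially transplants across pairs because $G_0$ is common under Fechnerian noise.

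With this anchor, I would finish by checking the three cases. If $p(y,z)>1/2$, apply the anchor to $(y,z)$ to get $r^{-1}(\theta(y,z))=v(y,z)$, then plug $t=\theta(y,z)$ into the master identity for the pair $(x,z)$; the right-hand side becomes $1-G_0(v(y,z)-v(x,z))=1-G_0(-v(x,y))=G_0(v(x,y))$ by symmetry of $G_0$, yielding $\bar p(x,y)=p(x,z)F(x,z)(\theta(y,z))$. If $p(y,z)=1/2$, then $G_0(-v(y,z))=1/2$ forces $v(y,z)=0$, so $v(x,y)=v(x,z)$ and $\bar p(x,y)=G_0(v(x,z))=1-G(x,z)(0)=p(x,z)$. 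If $p(y,z)<1/2$, use $p(z,y)>1/2$: apply the anchor to $(z,y)$ to get $r^{-1}(\theta(z,y))=v(z,y)$, plug $t=\theta(z,y)$ into the master identity for $(z,x)$, and compute $p(z,x)F(z,x)(\theta(z,y))=1-G_0(v(z,y)-v(z,x))=1-G_0(v(x,y))$, which gives the claimed formula.

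Since the three formulas involve only observables, they pin down a single number $\bar p(x,y)$ independent of which Fechnerian RUM-CF rationalizes $(p,f)$, delivering the prediction. The only genuine obstacle is the anchor step; everything else is algebraic manipulation using additivity of $v$ and the symmetry $G_0(-s)=1-G_0(s)$. A minor bookkeeping point is verifying that all percentiles invoked are well defined, which follows from the standing assumption $p>0$ and the fact that $F$ densities are strictly positive on $\mathbb{R}^+$.
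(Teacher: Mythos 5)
Your proposal is correct and follows essentially the same route as the paper's proof: both identify the percentile $\theta(a,b)$ with $r(v(a,b))$ via the rationalization identity under the common Fechnerian cdf, then evaluate that identity for the other pair at this time to read off $G_0(v(x,y))$, with the three cases handled identically. The only cosmetic difference is the direction of the anchor step (you invert $G_0$ at $1/2$ starting from $t=\theta(a,b)$, while the paper evaluates at $t=r(v(a,b))$ and matches the percentile definition), which is logically equivalent.
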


\begin{proof} Let $(u,g,r)$ be any Fechnerian RUM-CF which rationalizes an SCF-RT $(p,f)$. For any fixed $(x,y) \in C$, this particular RUM-CF predicts
\begin{align}\label{pred1}p(x,y)=G(y,x)(0)=G(v(x,y)).\end{align}
Let $(x,y) \in C \setminus D$ and $z \in X$ such that $(x,z), (y,z)\in D$ and w.l.o.g.\ $p(x,z)\geq p(y,z)$. We distinguish three cases.

\textit{Case 1: $p(y,z)>1/2$.} It follows from Proposition \ref{Prop.SRUM} that $v(y,z)>0$. From (\ref{ruratsc}) we obtain
\[p(y,z)F(y,z)(t)=1 - G(y,z)(r^{-1}(t))\]
for all $t > 0$, and hence, by (RUM.2) and the Fechnerian assumption,
\[p(y,z)F(y,z)(t)=G(z,y)(-r^{-1}(t))=G(-r^{-1}(t)-v(z,y))=G(v(y,z)-r^{-1}(t)).\]
Evaluating this equality at $t=r(v(y,z))$ yields
\[F(y,z)(r(v(y,z)))=\frac{G(0)}{p(y,z)}=\frac{1}{2p(y,z)}\]
(recall that $G(0)=1/2$). Hence, by definition of $\theta(y,z)$ we have $\theta(y,z)=r(v(y,z))$. Analogously, we also obtain
\[p(x,z)F(x,z)(t)=G(v(x,z)-r^{-1}(t))\]
for all $t > 0$, which for $t=r(v(y,z))$ yields 
\[p(x,z)F(x,z)(r(v(y,z)))=G(v(x,z)-v(y,z))=G(v(x,y)).\]
Combined with (\ref{pred1}) and the above expression for $\theta(y,z)$, this implies
\[p(x,y)=p(x,z)F(x,z)(\theta(y,z)),\]
which is the model-independent prediction $\bar{p}(x,y)$ given in the statement.

\textit{Case 2: $p(y,z)=1/2$.} It follows from Proposition \ref{Prop.SRUM} that $v(y,z)=0$. We obtain 
\[p(x,y)=G(v(x,y))=G(v(x,z)-v(y,z))=G(v(x,z))=p(x,z),\]
which is the model-independent prediction $\bar{p}(x,y)$ given in the statement.

\textit{Case 3: $p(y,z)<1/2$.} It follows from Proposition \ref{Prop.SRUM} that $v(z,y)>0$. Following the same steps as in Case 1 but with reversed order for each pair of alternatives yields the model-independent prediction 
\[\bar{p}(x,y)=1-\bar{p}(y,x)=1 - p(z,x)F(z,x)(\theta(z,y)),\]
as given in the statement.\end{proof}

To understand the probability formula in the theorem, just consider the case where $p(x,z) > p(y,z) > 1/2$. Then $u(x) > u(y) > u(z)$ must hold under the Fechnerian assumption, where the first inequality follows from Proposition \ref{Prop.FRUM} and the second inequality follows from Proposition \ref{Prop.SRUM}. Hence we can conclude that the unknown $p(x,y)$ must be strictly smaller than $p(x,z)$, because Fechnerian choice probabilities are strictly monotone in the underlying utility differences $v(\cdot,\cdot)$ across binary choice problems. The theorem now shows that a prediction for $p(x,y)$ can be obtained by multiplying the observed $p(x,z)$ with a discounting factor $F(x,z)(\theta(y,z))$. This factor is an observable, response-time-based indicator of the relative position of $u(y)$ within the interval $[u(x),u(z)]$.

Combined with the Fechnerian assumption, the use of response times allows to predict exact choice probabilities out-of-sample. Without response times, making such a prediction would require assuming a complete and specific functional form for the utility distribution. Hence, in analogy to our earlier results, response times again serve as a substitute for stronger distributional assumptions.\footnote{For the analyst who is willing to make the strong distributional assumptions required, for example, by probit or logit models, response times have no additional value when the available data on choice is rich. However, the literature has shown that the use of response times can be valuable even in the context of logit or probit models when choice data is scarce \citep[e.g.][]{Clithero18,KonovalovKrajbich17}. Our paper differs from these studies by showing that response time data enable the recovery of preferences even when rich choice data cannot recover them, that is, in the absence of untestable assumptions on utility noise.}

\section{Behavioral Models from Economics and Psychology}\label{Sec.Specific}

Theorem \ref{Thm.RUMRT} provided a sufficient condition for preference revelation without distributional assumptions. This analysis left open two questions. First, which SCF-RTs are rationalizable by RUM-CFs? And second, how tight is the sufficient condition? In this section, we try to answer these questions by studying SCF-RTs which are generated by specific behavioral models from the literature. We will do this first for standard RUMs from economics and microeconometrics to which we add chronometric functions, and second for standard sequential sampling models from psychology and neuroscience.

\subsection{The View from Economics}

Specific RUMs are commonly used for microeconometric estimation using choice data. These models typically start from a utility function $u: X \rightarrow \mathbb{R}$ and add an error term with zero mean to each option, such that the overall utility of $x \in X$ is given by a random variable \[\tilde{u}(x) = u(x) + \tilde{\epsilon}.\] As a next step, even more specific distributional assumptions are imposed. Two popular examples are the probit and the logit model. In the probit model, the error $\tilde{\epsilon}$ is assumed to be normally distributed and i.i.d.\ across the options. The distribution of the random utility difference $\tilde{v}(x,y)=\tilde{u}(x) - \tilde{u}(y)$ is then also normal and can be described by \[G(x,y)(v)= \Phi\left(\frac{v-v(x,y)}\sigma\right),\] where $\sigma$ is a standard deviation parameter and $\Phi$ is the cdf of the standard normal distribution. This simple specification gives rise to a Fechnerian model. Generalizations allow for heteroscedasticity or correlation between the error terms of different options, in which case the parameter $\sigma$ becomes choice-set-dependent, written $\sigma(x,y)$. Such a generalized model is no longer Fechnerian but still symmetric. In the logit model, the error $\tilde{\epsilon}$ is assumed to follow a Gumbel distribution, again i.i.d.\ across the options. In that case, the random utility difference follows a logistic distribution described by \[G(x,y)(v)= \left[1+e^{-\left(\frac{v-v(x,y)}{s}\right)}\right]^{-1},\] where $s$ is a scale parameter. This model is again Fechnerian, whereas one could think of generalizations where the scale parameter becomes choice-set-dependent, in which case it is no longer Fechnerian but still symmetric.

We now treat an arbitrary symmetric RUM-CF as the real data-generating process and apply our preference revelation method to the resulting data of choices and response times. It is trivial that this data is rationalizable within the class of symmetric models, and hence within the class of all models. More surprising is the fact that our sufficient condition from Theorem \ref{Thm.RUMRT} always recovers the correct preferences from the data.

\begin{proposition}\label{Prop.RatRUMs} Consider an SCF-RT $(p,f)$ that is generated by a symmetric RUM-CF $(u,g,r)$. Then, for any $(x,y) \in D$, $u(x) \geq u(y)$ implies $F(y,x)$ $q$-FSD $F(x,y)$, and $u(x) > u(y)$ implies $F(y,x)$ $q$-SFSD $F(x,y)$, for $q=p(x,y)/p(y,x)$.\end{proposition}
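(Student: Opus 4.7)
The plan is to convert the $q$-FSD (resp.\ $q$-SFSD) conclusion on response times into an inequality on the generating CDF $G(x,y)$ via the rationalization identities, and then exploit symmetry of $g(x,y)$ around its mean $v(x,y) = u(x) - u(y)$ to reduce everything to the sign of $v(x,y)$.

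First, fix $(x,y) \in D$ and write $\mu := v(x,y)$. The two identities derived in the proof of Theorem \ref{Thm.RUMRT}, namely $1 - G(x,y)(r^{-1}(t)) = p(x,y) F(x,y)(t)$ and $G(x,y)(-r^{-1}(t)) = p(y,x) F(y,x)(t)$, allow me to rewrite the inequality $F(y,x)(t) \leq q F(x,y)(t)$ with $q = p(x,y)/p(y,x)$ as $G(x,y)(-v) \leq 1 - G(x,y)(v)$, where $v = r^{-1}(t)$ ranges over the positive part of the image of $r^{-1}$. Symmetry of $g(x,y)$ around $\mu$ yields the identity $1 - G(x,y)(v) = G(x,y)(2\mu - v)$ by a direct change of variables, so the desired inequality becomes $G(x,y)(-v) \leq G(x,y)(2\mu - v)$. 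By monotonicity of $G$, this holds iff $-v \leq 2\mu - v$, iff $\mu \geq 0$, iff $u(x) \geq u(y)$, which settles the weak ($q$-FSD) case.

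For the strict case with $\mu > 0$, I must produce strict inequality $G(x,y)(-v) < G(x,y)(2\mu - v)$ at some $v$ in the image of $r^{-1}$. Take $v = \mu$: by (RUM.3) and symmetry the support of $g(x,y)$ is an interval symmetric around $\mu$ with $\mu$ in its interior, and the interval $[-\mu, \mu]$ (which contains $\mu$) intersects this support in a set of positive Lebesgue measure, so $G(x,y)$ strictly increases across $[-\mu, \mu]$ and the required strict inequality holds. The main technical obstacle is to verify that $v = \mu$ actually lies in the range of $r^{-1}$, i.e., that $r(\mu) > 0$: letting $V^{*} := \sup\{v : r(v) > 0\}$, passing to the limit $t \to 0^{+}$ in the rationalization identity $p(x,y) F(x,y)(t) = 1 - G(x,y)(r^{-1}(t))$ and using $F(x,y)(0^{+}) = 0$ forces $G(x,y)(V^{*}) = 1$, so the support of $g(x,y)$ is bounded above by $V^{*}$; since the support is symmetric around $\mu$ with positive radius, $\mu$ lies strictly below $V^{*}$, giving $r(\mu) > 0$ and producing a matching $t > 0$ to complete the strict case.
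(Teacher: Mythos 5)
Your proof is correct and follows essentially the same route as the paper's: both use the rationalization identities to reduce the $q$-FSD condition to $G(x,y)(-v) \leq 1 - G(x,y)(v) = G(x,y)(2v(x,y)-v)$ for $v=r^{-1}(t)$, and then conclude from symmetry and monotonicity of $G(x,y)$. The only difference is in the strict case, where the paper simply takes $t$ large enough that $r^{-1}(t) < v(x,y)$ and sandwiches $G(x,y)(v(x,y))=1/2$ strictly between $G(x,y)(-r^{-1}(t))$ and $G(x,y)(2v(x,y)-r^{-1}(t))$, whereas you work at $v=v(x,y)$ exactly and therefore need --- and correctly supply --- the additional argument that $r(v(x,y))>0$.
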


If we are given a data set that is generated by one of the random utility models commonly employed in the literature, augmented by a chronometric function, then our cautious revealed preference criterion always recovers the correct preferences, despite not using information about the specific data-generating process. Even the analyst who believes in the probit or logit distribution can work with our criterion. It will yield the same revealed preferences as an application of the full-fledged model if his belief is correct, but avoids a mistake if his belief is incorrect. 

We add two remarks on this result. First, symmetry of the RUM-CF does not guarantee that also the stronger condition in Corollary \ref{Cor.RUMRT1} is satisfied by the SCF-RT.\footnote{The example of an SCF-RT presented in Appendix \ref{App.Additional}, which violates this stronger condition, is actually generated by a symmetric RUM-CF. This RUM-CF features a bimodal utility difference distribution. It is possible to show that symmetry and unimodality together imply that the stronger sufficient condition in Corollary \ref{Cor.RUMRT1} is always satisfied. Both the probit and the logit model are unimodal, so working with the stronger condition comes with no loss if either of these models is the data-generating process.} Second, and more importantly, symmetry of the RUM-CF is not necessary for our condition in Theorem \ref{Thm.RUMRT} to be applicable. As the proof of Proposition \ref{Prop.RatRUMs} reveals, there are also many asymmetric models that generate data from which our criterion recovers preferences correctly.\footnote{\label{Fn.Asym}Formally, our proof relies on the property that $v(x,y)\geq 0$ implies $1-G(x,y)(v) \geq G(x,y)(-v)$ for all $v\geq0$, with strict inequality for some $v\geq 0$ when $v(x,y)>0$. This property is satisfied by symmetric models, but can also be satisfied by asymmetric models.}

We can go one step further and assume that there is a second source of noise in response times, on top of the noise already generated by random utility. The noise could be part of the behavioral model, e.g.\ due to a stochastic chronometric function or randomness in the physiological process implementing the response, or it could be due to imperfect observation by the analyst. We assume that this additional noise is independent from the randomness in utility, so that it does not systematically reverse the chronometric relationship. A common approach in the empirical literature \citep[e.g.,][]{Chabrisetal09,FischbacherHertwigBruhin13,Reinforcement3x3} is to add i.i.d.\ noise to log response times, where taking the log ensures that response times remain non-negative. An equivalent way of modelling this is by means of multiplicative noise, which is technically convenient for our purpose. Formally, a \textit{random utility model with a noisy chronometric function} (RUM-NCF) can be obtained from a RUM-CF by letting the response time for a realized utility difference $v \in \mathbb{R}$ become the random variable \[\tilde{r}(v)= r(|v|) \cdot \tilde{\eta}.\] Here, $\tilde{\eta}$ is a non-negative random term with mean one, assumed to be i.i.d.\ according to a density $h$ on $\mathbb{R}^+$. The probability of a realized response time of at most $t>0$, conditional on $x$ being chosen over $y$, is now the probability that the realized utility difference is at least $r^{-1}(t/\tilde{\eta})$, conditional on that difference being positive. Hence, for an SCF-RT $(p,f)$ that is generated by a RUM-NCF $(u,g,r,h)$ we have \[ \frac{\int_{0}^{\infty} \left[1-  G(x,y)(r^{-1}(t/\eta))\right] h(\eta) d \eta }{ 1-G(x,y)(0)}= F(x,y)(t)\] for all $t>0$ and all $(x,y) \in D$, which is analogous to equation (\ref{ruratsc}) in Definition \ref{defratscfrt}.

Our preference revelation approach based on RUM-CFs is misspecified when the real data-generating process is a RUM-NCF, because the additional noise is erroneously explained by additional randomness in utility. However, as the next proposition will show, this misspecification is often inconsequential. For the entire class of SCF-RTs that are generated by (and hence are rationalizable in) the class of symmetric RUM-NCFs with full support utility distributions $g$ (like probit or logit) and arbitrary noise distributions $h$, our previous condition remains the correct criterion for preference revelation.\footnote{Without full support of the utility difference distributions, some response times may arise only because of the additional noise but could never be generated by a realized utility difference. The distribution of those response times would be uninformative of utility and does not obey the chronometric relationship.}

\begin{proposition}\label{Prop.RatNRUMs} Consider an SCF-RT $(p,f)$ that is generated by a symmetric RUM-NCF $(u,g,r,h)$ where each $g(x,y)$ is strictly positive. Then, for any $(x,y) \in D$, $u(x) \geq u(y)$ implies $F(y,x)$ $q$-FSD $F(x,y)$, and $u(x) > u(y)$ implies $F(y,x)$ $q$-SFSD $F(x,y)$, for $q=p(x,y)/p(y,x)$.\end{proposition}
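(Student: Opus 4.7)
The plan is to reduce the claim to a pointwise inequality about the utility difference cdf $G(x,y)$ and then show that this pointwise inequality survives integration against the response time noise density $h$.

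First I would rewrite both sides of the $q$-FSD inequality in terms of $G(x,y)$. Observe that $F(y,x)$ $q$-FSD $F(x,y)$ for $q=p(x,y)/p(y,x)$ is equivalent to
\[p(y,x) F(y,x)(t) \leq p(x,y) F(x,y)(t) \text{ for all } t \geq 0.\]
Using the defining integral equation of a RUM-NCF and property (RUM.2) (which gives $1-G(y,x)(v)=G(x,y)(-v)$), the two sides can be rewritten as
\[p(x,y)F(x,y)(t)=\int_{0}^{\infty}\bigl[1-G(x,y)(r^{-1}(t/\eta))\bigr] h(\eta)\,d\eta,\]
\[p(y,x)F(y,x)(t)=\int_{0}^{\infty} G(x,y)(-r^{-1}(t/\eta))\,h(\eta)\,d\eta.\]
So it suffices to prove, for every $t>0$, that the integrand difference $[1-G(x,y)(r^{-1}(t/\eta))] - G(x,y)(-r^{-1}(t/\eta))$ is nonnegative (respectively, strictly positive somewhere).

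Second I would establish the pointwise symmetry inequality. Since $g(x,y)$ is symmetric around $\mu:=v(x,y)$, we have $G(x,y)(v)=1-G(x,y)(2\mu-v)$, hence $1-G(x,y)(v)=G(x,y)(2\mu-v)$. If $u(x)\geq u(y)$, then $\mu\geq 0$ and $2\mu-v\geq -v$ for all $v\geq 0$, so monotonicity of $G(x,y)$ yields $G(x,y)(2\mu-v)\geq G(x,y)(-v)$, i.e.
\[1-G(x,y)(v) \;\geq\; G(x,y)(-v)\quad \text{for all } v\geq 0.\]
If $u(x)>u(y)$, then $\mu>0$ and $2\mu-v>-v$; because $g(x,y)$ has full support, $G(x,y)$ is strictly increasing on $\mathbb{R}$, so the inequality is strict for every $v\geq 0$.

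Third I would apply this with $v=r^{-1}(t/\eta)$. For any $t>0$ and any $\eta>0$, the argument $t/\eta$ lies in $\mathbb{R}^{++}$, so $r^{-1}(t/\eta)$ is a well-defined nonnegative number (restricting $r$ to the subset where $r>0$, as permitted by Definition \ref{defrumrt}). Hence the integrand is pointwise nonnegative whenever $\mu\geq 0$, which yields the $q$-FSD conclusion by integrating against $h$. When $\mu>0$, the integrand is strictly positive for every $\eta>0$; since $h$ is a density on $\mathbb{R}^+$ (so it is positive on a set of positive Lebesgue measure), the integral is strictly positive for \emph{every} $t>0$, giving the $q$-SFSD conclusion a fortiori.

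The main obstacle I anticipate is conceptual rather than technical: verifying that the full-support hypothesis on $g(x,y)$ is exactly what is needed to pass from weak to strict monotonicity of $G(x,y)$, so that the asymmetry induced by $\mu>0$ is not washed out by the independent multiplicative noise $\tilde\eta$. Once that observation is in place, the rest is a straightforward application of (RUM.2), the RUM-NCF integral identity, and the pointwise symmetry argument that already drives Proposition \ref{Prop.RatRUMs}; indeed, setting $h$ equal to a Dirac mass at $1$ recovers that earlier result as a special case, which is a useful sanity check.
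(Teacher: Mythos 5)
Your proof is correct and takes essentially the same route as the paper's: the paper first applies Proposition \ref{Prop.RatRUMs} to the underlying noiseless RUM-CF (whose proof is exactly your pointwise symmetry inequality $1-G(x,y)(v)\geq G(x,y)(-v)$ for $v\geq 0$) and then observes that $q$-FSD is preserved when both conditional response-time cdfs are mixed over the same noise density $h$; you simply inline that argument at the level of the integrands $1-G(x,y)(r^{-1}(t/\eta))$ and $G(x,y)(-r^{-1}(t/\eta))$. The only (harmless) difference is that the full-support hypothesis lets you conclude the strict inequality at \emph{every} $t>0$, which is stronger than the ``strict for some $t$'' that $q$-SFSD requires and that the paper settles for.
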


The proof rests on the insight that $q$-FSD is invariant to independent perturbations. Whenever an SCF-RT $(p,f)$ satisfies $F(y,x)$ $q$-FSD $F(x,y)$, then the SCF-RT $(p,\hat{f})$ obtained after perturbing response times by log-additive or multiplicative noise still satisfies $\hat{F}(y,x)$ $q$-FSD $\hat{F}(x,y)$. The case of RUM-NCFs obtained from symmetric RUM-CFs is a natural application of this insight. However, the robustness of our preference revelation criterion holds more generally for perturbations of any data-generating process for which the criterion has bite. This could be a random utility model that is not symmetric, or it could be one of the sequential sampling models studied in the next section.


\subsection{The View from Psychology}

A different way of generating stochastic choices and response times is by means of a sequential sampling model as used extensively in psychology and neuroscience. The basic building block for binary choice problems is the {\em drift-diffusion model} (DDM) of \citet{Ratcliff78}. A \textit{DDM with constant boundaries} is given by a drift rate $\mu \in \mathbb{R}$, a diffusion coefficient $\sigma^2>0$, and symmetric barriers $B$ and $-B$ with $B>0$. A stochastic process starts at $Z(0)=0$ and evolves over time according to a Brownian motion
\[dZ(t)=\mu dt + \sigma dW(t).\]
The process leads to a choice of $x$ (resp.\ $y$) if the upper (resp.\ lower) barrier is hit first, the response time being the time at which this event occurs. 

Although the DDM, as a model anchored in psychology, usually does not make reference to underlying utilities, $Z(t)$ is often interpreted as the difference in spiking rates between neurons computing values for the competing options. Hence it is natural to introduce a link to utility by assuming that the drift rate is determined such that $\mu=\mu(x,y)=-\mu(y,x) \geq 0$ if and only if $v(x,y)\geq 0$. This way, the DDM generates stochastic choices and response times from an underlying deterministic utility function $u: X \rightarrow \RR.$ 

The stochastic path of $Z(t)$ can be interpreted as the accumulation of evidence in favor of one or the other option as the brain samples past (episodic) information. Recent research has shown that evidence-accumulation models like the DDM actually represent optimal decision-making procedures under neurologically founded constraints, but optimality requires that the barriers are not constant but rather collapse towards zero as $t$ grows to infinity \citep{TajimaDrugowitschPouget16}. Similarly, \citet{FudenbergStrackStrzalecki18} model optimal sequential sampling when utilities are uncertain and gathering information is costly, and find that the range for which the agent continues to sample should collapse to zero as $t$ grows. A partial intuition for this result is that a value of $Z(t)$ close to zero for small $t$ carries little information, while a value of $Z(t)$ close to zero for large $t$ indicates that the true utilities are most likely close to each other, and hence sampling further evidence has little value. To reflect this idea, a {\em DDM with collapsing boundaries} works in the exact same way as a DDM with constant boundaries, with the only difference that the barriers are given by a continuous and strictly decreasing function $b:\RR^+\to\RR^{++}$ such that $\lim_{t\to \infty} b(t) = 0$. That is, $x$ is chosen if $Z(t)$ hits the upper barrier $b(t)$ before hitting the lower barrier $-b(t)$, and $y$ is chosen if the converse happens, with the response time being the first crossing time  (see Figure \ref{Fig.DDM}).

We now treat a DDM with an underlying utility function as the real data-generating process and again apply our preference revelation method to the resulting data of choices and response times.\footnote{For the case of constant boundaries, closed-form solutions for choice probabilities and response time distributions generated by the DDM are known, see e.g.\ \citet{Palmeretal05}. Closed-form solutions are not available for the case of collapsing boundaries. \citet{Webb18} explores the link between bounded accumulation models as the DDM and random utility models and shows how to derive distributional assumptions for realized utilities of the latter if the true data-generating process is of the DDM form.} The following proposition, the proof of which relies on a result by \citet{FudenbergStrackStrzalecki18}, shows that our sufficient condition from Theorem \ref{Thm.RUMRT} is again tight and always recovers the correct preferences.

\begin{proposition}\label{Prop.RatDDMs} Consider an SCF-RT $(p,f)$ that is generated by a DDM with constant or collapsing boundaries and underlying utility function $u$. Then, $u(x) \geq u(y)$ implies $F(y,x)$ $q$-FSD $F(x,y)$, and $u(x) > u(y)$ implies $F(y,x)$ $q$-SFSD $F(x,y)$, for $q=p(x,y)/p(y,x)$.\end{proposition}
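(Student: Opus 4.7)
The plan is to reduce the claim to two basic properties of the DDM and then combine them with the trivial inequality $q \geq 1$.

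First, I would observe that, by the sign convention $\mu(x,y) = -\mu(y,x)$ and the symmetry of the boundaries around zero, $u(x) \geq u(y)$ is equivalent to $\mu(x,y) \geq 0$, which by symmetry of Brownian motion yields $p(x,y) \geq 1/2$ and hence $q = p(x,y)/p(y,x) \geq 1$. If $u(x) > u(y)$, all three inequalities are strict.

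Second, I would establish that, in both the constant and collapsing boundary cases, $u(x) \geq u(y)$ implies
\[F(y,x)(t) \leq F(x,y)(t) \text{ for all } t \geq 0. \qquad (*)\]
For constant boundaries, $(*)$ follows from the classical Ratcliff property (see e.g.\ \citet{Palmeretal05}) that the first-passage time of the Brownian motion is independent of which barrier is hit, so in fact $F(y,x) = F(x,y)$ identically. For collapsing boundaries, $(*)$ is the ``errors are (weakly) slower than correct responses'' statement, which \citet{FudenbergStrackStrzalecki18} establish via a coupling/reflection argument applied to Brownian paths against the symmetric collapsing boundaries $b(t)$ and $-b(t)$.

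Third, combining $(*)$ with $q \geq 1$ yields $F(y,x)(t) \leq F(x,y)(t) \leq q \cdot F(x,y)(t)$ for all $t \geq 0$, i.e., $F(y,x)$ $q$-FSD $F(x,y)$. For the strict case $u(x) > u(y)$, we have $q > 1$; since $F(x,y)$ is the cdf of a first-passage time on $\mathbb{R}^+$, there exists $t>0$ with $F(x,y)(t) > 0$, at which the second inequality becomes strict, yielding $F(y,x)$ $q$-SFSD $F(x,y)$.

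The main obstacle is the collapsing-boundary case: the independence of hitting time and hitting barrier that renders the constant-boundary case immediate breaks down once the boundaries move, so one cannot just conclude $F(y,x) = F(x,y)$. Invoking the FSS coupling argument carefully, and checking that its conclusion translates into the full first-order stochastic dominance statement $(*)$ (rather than, say, only a statement about mean response times) is the technical heart of the proof.
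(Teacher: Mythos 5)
Your proposal is correct in outline and rests on the same two external inputs as the paper (the classical constant-boundary facts from \citet{Palmeretal05} and the likelihood-ratio result of \citet{FudenbergStrackStrzalecki18}), but in the collapsing-boundary case it takes a detour the paper avoids. You route the argument through the \emph{unweighted} dominance statement $(*)$, i.e.\ $F(y,x)(t)\leq F(x,y)(t)$, and then tack on $q\geq 1$; the paper instead uses the FSS formula $p(x,y)f(x,y)(t)/\bigl(p(y,x)f(y,x)(t)\bigr)=\exp\bigl(\mu(x,y)b(t)/(\sigma^2/2)\bigr)\geq 1$ directly as a pointwise inequality on the \emph{weighted} densities and integrates it (the argument of Corollary \ref{Cor.RUMRT1}), which yields $p(x,y)F(x,y)(t)\geq p(y,x)F(y,x)(t)$ --- exactly the $q$-FSD condition --- in one line. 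Your step $(*)$ is in fact true, but it is a strictly stronger claim than what is needed, and the step you defer as ``the technical heart'' is not a coupling argument at all: it follows from the same FSS identity, since that identity says $f(y,x)(t)=f(x,y)(t)\cdot\frac{p(x,y)}{p(y,x)}e^{-\mu b(t)/(\sigma^2/2)}$, where the reweighting factor is nondecreasing in $t$ when $\mu\geq 0$ (because $b$ is nonincreasing), so $f(y,x)$ dominates $f(x,y)$ in the monotone-likelihood-ratio order and hence in first-order stochastic dominance. Filling that in makes your proof complete; what the paper's route buys is that no MLR-to-FOSD step is needed at all, while what your route buys is the explicit and interpretable intermediate conclusion that errors are slower than correct responses in the full FOSD sense, not merely after reweighting by choice probabilities.
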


To understand the intuition behind the proof, recall from our discussion of Theorem \ref{Thm.RUMRT} that the condition ensuring a revealed preference can be reformulated as $P(x,y)(t)\geq P(y,x)(t)$ for all $t$, that is, the probability of choosing $x$ over $y$ before any pre-specified response time $t$ should be larger than the probability of choosing $y$ over $x$ before $t$. In a DDM, $u(x) > u(y)$ means that the drift rate $\mu(x,y)>0$ favors $x$ (since the upper barrier reflects a choice of $x$). Hence the probability of hitting the upper barrier first before any pre-specified response time $t$ is indeed larger than the probability of hitting the lower barrier first. The proof actually establishes that the stronger condition in Corollary \ref{Cor.RUMRT1} is always satisfied by an SCF-RT that is generated by a DDM. Furthermore, the result would continue to hold for DDMs with more general boundary functions that are not necessarily constant or collapsing, but these models do not always generate well-behaved choices and have received less attention in the literature.

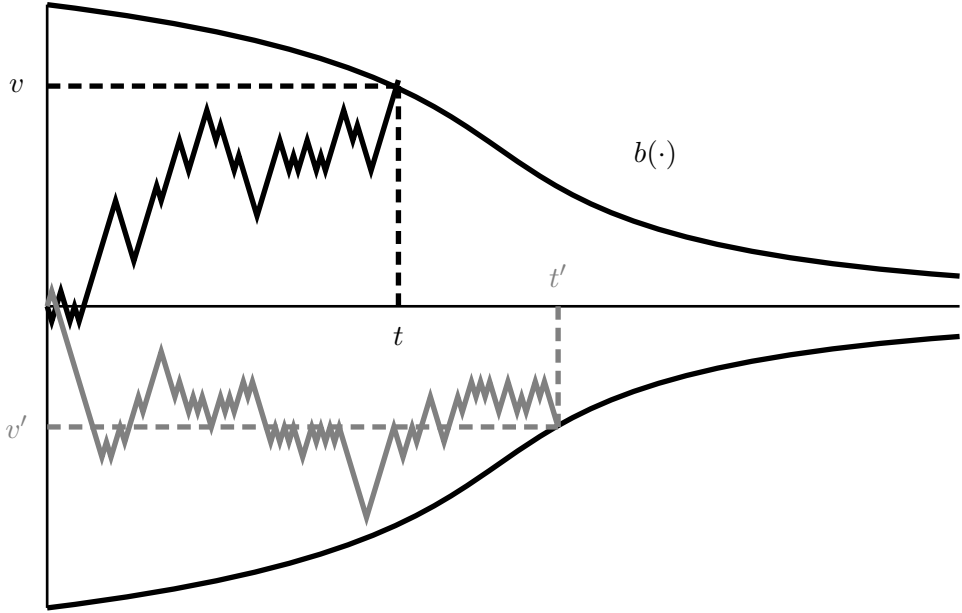
\begin{figure}[!t]
\centering
\psset{xunit=2cm,yunit=4cm,linewidth=2pt}
\begin{pspicture}(0,-1)(6,1)
\psline[linewidth=1pt](0,0)(6,0)
\psline[linewidth=1pt](0,-1)(0,1)
\psbezier(0,1)(4,0.75)(2,0.25)(6,0.1)
\psbezier(0,-1)(4,-0.75)(2,-0.25)(6,-0.1)

\psline(0,0)(0.03,-0.05)(0.06,0)(0.09,0.05)(0.12,0)(0.15,-0.05)(0.18,0)(0.21,-0.05)(0.24,0)(0.27,0.05)(0.3,0.1)(0.33,0.15)(0.36,0.2)(0.39,0.25)(0.42,0.3)(0.45,0.35)
(0.48,0.3)(0.51,0.25)(0.54,0.2)(0.57,0.15)(0.6,0.2)(0.63,0.25)(0.66,0.3)(0.69,0.35)
(0.72,0.4)(0.75,0.35)(0.78,0.4)(0.81,0.45)(0.84,0.5)(0.87,0.55)(0.9,0.5)(0.93,0.45)
(0.96,0.5)(0.99,0.55)(1.02,0.6)(1.05,0.65)(1.08,0.6)(1.11,0.55)(1.14,0.6)(1.17,0.55)
(1.2,0.5)(1.23,0.45)(1.26,0.5)(1.29,0.45)(1.32,0.4)(1.35,0.35)(1.38,0.3)(1.41,0.35)
(1.44,0.4)(1.47,0.45)(1.5,0.5)(1.53,0.55)(1.56,0.5)(1.59,0.45)(1.62,0.5)(1.65,0.45)
(1.68,0.5)(1.71,0.55)(1.74,0.5)(1.77,0.45)(1.8,0.5)(1.83,0.45)(1.86,0.5)(1.89,0.55)
(1.92,0.6)(1.95,0.65)(1.98,0.6)(2.01,0.55)(2.04,0.6)(2.07,0.55)(2.1,0.5)(2.13,0.45)
(2.16,0.5)(2.19,0.55)(2.22,0.6)(2.25,0.65)(2.28,0.7)(2.31,0.75)

\psline[linecolor=gray](0,0)(0.03,0.05)(0.06,0)(0.09,-0.05)(0.12,-0.1)(0.15,-0.15)
(0.18,-0.2)(0.21,-0.25)(0.24,-0.3)(0.27,-0.35)(0.3,-0.4)(0.33,-0.45)(0.36,-0.5)(0.39,-0.45)
(0.42,-0.5)(0.45,-0.45)(0.48,-0.4)(0.51,-0.45)(0.54,-0.4)(0.57,-0.35)(0.6,-0.3)(0.63,-0.35)
(0.66,-0.3)(0.69,-0.25)(0.72,-0.2)(0.75,-0.15)(0.78,-0.2)(0.81,-0.25)(0.84,-0.3)(0.87,-0.25)
(0.9,-0.3)(0.93,-0.35)(0.96,-0.3)(0.99,-0.35)(1.02,-0.3)(1.05,-0.35)(1.08,-0.4)(1.11,-0.35)
(1.14,-0.3)(1.17,-0.35)(1.2,-0.3)(1.23,-0.35)(1.26,-0.3)(1.29,-0.25)(1.32,-0.3)(1.35,-0.25)
(1.38,-0.3)(1.41,-0.35)(1.44,-0.4)(1.47,-0.45)(1.5,-0.4)(1.53,-0.45)(1.56,-0.4)(1.59,-0.45)
(1.62,-0.4)(1.65,-0.45)(1.68,-0.5)(1.71,-0.45)(1.74,-0.4)(1.77,-0.45)(1.8,-0.4)(1.83,-0.45)
(1.86,-0.4)(1.89,-0.45)(1.92,-0.4)(1.95,-0.45)(1.98,-0.5)(2.01,-0.55)(2.04,-0.6)(2.07,-0.65)
(2.1,-0.7)(2.13,-0.65)(2.16,-0.6)(2.19,-0.55)(2.22,-0.5)(2.25,-0.45)(2.28,-0.4)(2.31,-0.45)
(2.34,-0.5)(2.37,-0.45)(2.4,-0.4)(2.43,-0.45)(2.46,-0.4)(2.49,-0.35)(2.52,-0.3)(2.55,-0.35)
(2.58,-0.4)(2.61,-0.45)(2.64,-0.4)(2.67,-0.35)(2.7,-0.3)(2.73,-0.35)(2.76,-0.3)(2.79,-0.25)
(2.82,-0.3)(2.85,-0.25)(2.88,-0.3)(2.91,-0.25)(2.94,-0.3)(2.97,-0.35)(3,-0.3)(3.03,-0.25)
(3.06,-0.3)(3.09,-0.35)(3.12,-0.3)(3.15,-0.35)(3.18,-0.3)(3.21,-0.25)(3.24,-0.3)(3.27,-0.25)
(3.3,-0.3)(3.33,-0.35)(3.36,-0.4)

\psline[linestyle=dashed](2.31,0)(2.31,0.73)
\psline[linestyle=dashed](0,0.73)(2.31,0.73)

\rput[c](2.31,-0.1){$t$}
\rput[c](-0.2,0.73){$v$}
\rput[c](4,0.5){$b(\cdot)$}

\psline[linecolor=gray,linestyle=dashed](3.36,0)(3.36,-0.4)
\psline[linecolor=gray,linestyle=dashed](0,-0.4)(3.36,-0.4)
\rput[c](3.36,0.1){\gray $t'$}
\rput[c](-0.2,-0.4){\gray $v'$}

\end{pspicture}
\caption{Illustration of a DDM with collapsing decision boundaries.}\label{Fig.DDM}
\end{figure}

We conclude the section with remarks about rationalizability. An SCF-RT that is generated by a DDM is always rationalizable within the class of all RUM-CFs when we consider just two options, $D=\{(x,y),(y,x)\}$, which is the setting in which DDMs are typically applied. This is a corollary of the following more general result.

\begin{proposition}\label{Prop.Rat2Opt} Suppose $D=\{(x,y),(y,x)\}$. Then, any SCF-RT $(p,f)$ is rationalizable within the class of all RUM-CFs.\end{proposition}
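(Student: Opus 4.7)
My plan is to reverse-engineer the rationalization identities in Definition~\ref{defratscfrt}. Since only one choice pair (up to reversal) needs to be rationalized, I have a lot of freedom: I would choose the chronometric function $r$ first, then let the two identities dictate the shape of $G(x,y)$, and finally set $u(x)-u(y)$ equal to the mean of the distribution so obtained.

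Concretely, I would take a continuous, strictly decreasing bijection from $(0,1]$ onto $[0,\infty)$ with $r(v)\to\infty$ as $v\to 0^+$ and $r(1)=0$, extended by $r(v)=0$ for $v\geq 1$; for definiteness, $r(v)=1/v-1$ on $(0,1]$ works. Then I define
\[
G(x,y)(v)=\begin{cases}
1-p(x,y)\,F(x,y)(r(v)) & \text{if } v>0,\\
p(y,x) & \text{if } v=0,\\
p(y,x)\,F(y,x)(r(-v)) & \text{if } v<0.
\end{cases}
\]
Monotonicity and the boundary behavior of $r$ and of the $F$'s (recall $F(x,y)(0)=0$) make this a bona fide continuous cdf supported inside $[-1,1]$, absolutely continuous because the $f$'s are densities, so it admits a density $g(x,y)$; the support equals $[-1,1]$ and is connected, giving (RUM.3). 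By direct substitution $1-G(x,y)(r^{-1}(t))=p(x,y)F(x,y)(t)$ and $G(x,y)(0)=p(y,x)$, which are exactly the rationalization identity and choice identity for the pair $(x,y)$.

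For (RUM.2) I set $g(y,x)(v):=g(x,y)(-v)$, which gives $G(y,x)(v)=1-G(x,y)(-v)$ and in particular $G(y,x)(0)=p(x,y)$; substituting these into the $(y,x)$-rationalization identity reduces it to $G(x,y)(-r^{-1}(t))=p(y,x)F(y,x)(t)$, which is precisely the definition of $G(x,y)$ on the negative half-line. For each remaining pair in $C\setminus D$ I would freely assign, say, a normal density centered at the appropriate utility difference, paired with its reflection so that (RUM.1)--(RUM.3) are automatic. With $u(y):=0$ and $u(x):=\mu$ where $\mu=\int v\,g(x,y)(v)\,dv$, condition (RUM.1) holds for $(x,y)$ by construction and for $(y,x)$ by (RUM.2).

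The only delicate point is ensuring that $\mu$ is finite, so that it can legitimately serve as $u(x)-u(y)$. A naive choice such as $r(v)=1/v$ produces $g(x,y)(v)=p(x,y)f(x,y)(1/v)/v^2$ on $v>0$, and a change of variables $t=1/v$ shows the positive-side contribution to the mean equals $p(x,y)\int_0^\infty t^{-1}f(x,y)(t)\,dt$, which easily diverges if $f(x,y)$ fails to vanish sufficiently fast at the origin. Choosing $r$ with bounded inverse, equivalently $r(v)=0$ for all sufficiently large $v$, forces $g(x,y)$ to have compact support and makes $\mu$ automatically finite. This is the main obstacle, and it is precisely why the definition of RUM-CF is set up to allow chronometric functions that reach zero.
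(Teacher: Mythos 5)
Your construction is correct and is essentially identical to the paper's proof: the paper likewise fixes a chronometric function whose inverse is a bounded, collapsing boundary (so that $G(x,y)$ has compact support), defines $G(x,y)$ on the positive and negative half-lines from $p(x,y)F(x,y)$ and $p(y,x)F(y,x)$ respectively, obtains $g(y,x)$ by reflection, and sets $u(x)-u(y)$ equal to the resulting mean. Your closing remark on why the support must be bounded to keep the mean finite is a point the paper leaves implicit, but the argument is the same.
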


In the proof, we fix an arbitrary chronometric function $r$ that satisfies Definition \ref{defrumrt} (and has $r(v)=0$ for large $v$) and construct an associated density $g(x,y)$ such that the data is rationalized. The construction is particularly illuminating for an SCF-RT that is generated by a DDM with collapsing boundaries. In that case, it is very natural to choose the chronometric function of the RUM-CF as the inverse of the boundary function of the DDM. The associated density $g(x,y)$ then describes the distribution of the value of $Z(t)$ at the endogenous decision point (see again Figure \ref{Fig.DDM}). This interpretation is in line with \citet{FudenbergStrackStrzalecki18}, who argue that $Z(t)$ is a signal about the true utility difference. In other words, if we think of the chronometric function as the inverse of the collapsing boundary function, then realized utility differences can be interpreted as realized signals about the underlying deterministic utility difference.

With more than two options, the question of rationalizability is less straightforward. It is not clear that the above construction yields utility difference distributions that are consistent with a single utility function $u$. For SCF-RTs generated by DDMs, this problem is further complicated by the fact that there is no agreed upon discipline on how utility differences $v(x,y)$ map into drift rates $\mu(x,y)$, beyond the basic ordinal requirement that $\mu(x,y)\geq 0$ if and only if $v(x,y)\geq 0$. Hence, rationalizability of an arbitrary DDM-generated data set in terms of a RUM-CF is not guaranteed. Our approach, however, can be generalized naturally to solve this problem. Define a {\em generalized random utility model with a chronometric function} (GRUM-CF) by replacing (RUM.1) in Definition \ref{def.rum} by the weaker requirement \begin{enumerate}[(GRUM.1)]
\item[(GRUM.1)] $\int_{-\infty}^{+\infty} v g(x,y)(v) d v\geq 0 \; \Leftrightarrow \; u(x) - u(y) \geq 0$,
\end{enumerate}
keeping everything else unchanged. That is, realized utility differences (or ``decision values'' as in the recent neuroeconomics literature, see e.g.\ \citealp[][Chapters 8--10]{Neuroecon2ndEd}) are interpreted as signals  about ordinal preferences rather than cardinal utility. It is easy to see that Theorem \ref{Thm.RUMRT} remains valid for preference revelation within the class of all GRUM-CFs. Furthermore, any SCF-RT generated by a DDM with constant or collapsing boundaries and underlying utility function $u$ is rationalizable in the class of all GRUM-CFs.\footnote{This approach could be extended even further, building a connection to the economic literature on consumer theory without transitive preferences. \citet{Shafer74} showed that every complete, continuous, and strongly convex binary relation $R$ (not necessarily transitive) on a Euclidean space can be represented by a continuous, real-valued, two-variable function $v$ such that $xRy$ if and only if $v(x,y)\geq 0$, with $v(x,y)=-v(y,x)$. Hence, one could replace $u$ in the definition of RUM-CFs and DDMs by a complete but not necessarily transitive binary relation $R$ on $X$. The appropriate reformulation of (RUM.1) would be $\int_{-\infty}^{+\infty} v g(x,y)(v) d v\geq 0 \; \Leftrightarrow \; x R y$, and DDMs could be linked to $R$ by $\mu(x,y)\geq 0 \; \Leftrightarrow \; x R y$.}

\section{Relation to the Literature}\label{Sec.Lit}

\subsection{Response Times}\label{Subsec.LitRT}

Our work is related to a recent strand of the literature which makes the empirical point that response time data can help with structural estimation of preferences by using the chronometric function. \citet{SchotterTrevino14} and \citet{KonovalovKrajbich17} propose to estimate indifference points by using the longest response times in a data set and then deduce ordinal preference relations from the indifference points. Other studies have shown how response times are indicative of effort allocation \citep{Moffatt05} or can be used to improve out-of-sample predictions of choices \citep{Clithero18}. All those works, however, consider fully-specified structural models and add response times to improve the estimation, an approach which can be useful when choice data is scarce or not reliable. In contrast, our work shows that response times enable important improvements in the recovery of preferences even when rich choice data are available. We provide a simple and intuitive condition on response times which ensures that preferences can be recovered in the absence of any assumptions on the distribution of utility noise, i.e., under conditions where the recovery of preferences fails even with rich choice data. For data sets generated by models from a large class of data-generating processes, our condition is not only sufficient but also necessary for recovering preferences (in the sense of Propositions \ref{Prop.RatRUMs} - \ref{Prop.RatDDMs}). 

\citet{EcheniqueSaito17} provide an axiomatization of the chronometric relationship, viewed as a mapping from utility differences to response times as in our model. They consider deterministic choices and deterministic response times only. Their main interest is a characterization of finite and incomplete data sets that can be rationalized by a deterministic utility function together with a chronometric function. That is, they do not consider stochastic choice or the problems that arise when utility is noisy.

While response times are generally receiving increased attention as a tool to improve economic analysis, detailed studies are still scarce (a review and discussion can be found in \citealp{SpiliopoulosOrtmann18}). Examples include the studies of risky decision-making by \citet{Wilcox93,Wilcox94}, the web-based studies of \citet{RubinsteinRT07,Rubinstein13}, and the study of belief updating by \citet{BayesRT}. The study of \citet{PrefRev} uses the chronometric relationship to understand the preference reversal phenomenon \citep{GretherPlott79}, where decision-makers typically make lottery choices which contradict their elicited certainty equivalents when one of the lotteries has a salient, large outcome. \citet{PrefRev} show that, if reversals are due to a bias in the elicitation process rather than in the choice process, choices associated with reversals should take longer than comparable non-reversals. The reason is simply that reversals (where noisy valuations ``flip'') are more likely when the actual utilities are close, and hence, by the chronometric relationship, response times must be longer. The prediction is readily found in the data, providing insights into the origin and nature of reversals.

\subsection{Stochastic Choice}\label{Subsec.LitSC}

Our work is also related to the recent literature on stochastic choice theory using extended data sets. \citet{CaplinMartin15} and \citet{CaplinDean15} consider state-dependent data sets, which specify choice frequencies as functions of observable states. \citet{CaplinMartin15} study rationalizability by maximization of expected utility when the decision-maker has a prior on the state and updates it through Bayes' rule after receiving signals on the state. \citet{CaplinDean15} study rationalizability when the decision-maker additionally decides how much effort to invest in obtaining costly signals (through attention strategies). While their results focus on rationalizability, state-dependent choice data adds an additional dimension which potentially could help with preference revelation. 

Our paper also contributes to the theory of behavioral welfare economics, which aims at eliciting preferences from inconsistent choice data \citep[among others, see][]{BernheimRangel09,RubinsteinSalant12,Masatlioglu12,BenkertNetzer18}. Most of this literature considers deterministic choices. Two exceptions are \citet{ManziniMariotti14} and \citet{ApesteguiaBallester15}. \citet{ManziniMariotti14} show that underlying preferences can be identified when stochastic choice is due to stochastic consideration sets. \citet{ApesteguiaBallester15} propose as a welfare measure the preference relation which is closest (in a certain, well-defined sense) to the observed stochastic choices. Similarly to our results in Section \ref{Sec.Specific}, they show that this procedure recovers the true underlying preference if the data is generated by random utility models fulfilling a monotonicity condition. To the best of our knowledge, this literature has not yet discovered the value of response time data for preference revelation.

The difficulty of a choice problem can be influenced by additional factors, on top of the utility difference between the options. For instance, if the options are multidimensional, then a choice problem involving a dominant alternative may be very simple, generating accurate and quick responses even if the underlying utility difference is small. The same applies to different framings of the same problem \citep[see e.g.][]{SalantRubinstein08,BenkertNetzer18}. This is a well-known empirical problem of conventional RUMs, independently of whether they are enriched with response times or not. However, our definition of RUMs sidesteps this problem. The reason is that the added generality in Definition \ref{def.rum} allows for pair-specific pdfs $g(x,y)$. Hence, if $u(x)-u(y)$ and $u(x')-u(y')$ are similar but the pairs $(x,y)$ and $(x',y')$ vary along a dimension not captured by underlying utilities alone, the pdfs $g(x,y)$ and $g(x',y')$ can differ reflecting this additional dimension. For instance, \citet{HeNatenzon18} consider ``moderate utility models'' relating choice probabilities $p(x,y)$ to utility differences $u(x)-u(y)$ and an additional distance $d(x,y)$ which reflects choice difficulty beyond utility differences. Specifically, the assumption is that $p(w,x)\geq p(y,z)$ if and only if $(u(w)-u(x))/d(w,x) \geq (u(y)-u(z))/d(y,z)$, which retains the basic regularities of conventional models while allowing for violations of weak stochastic transitivity. This is also encompassed by our definition of RUMs with pair-specific utility difference distributions. A similar point applies if decisions are subject to pair-specific impulsive tendencies or intuitive processes which might influence response times independently of the underlying utilities \citep{BayesRT}.

\section{Conclusion}\label{Sec.Conclusion}

Choice theory has traditionally focused on choice outcomes and has ignored auxiliary data such as response times. This neglect comes at a cost even for traditional choice-theoretic questions. In the context of stochastic choice, ignoring response time data means discarding information about the distribution of random utility, which then has to be compensated by making distributional assumptions. In this paper, we have developed a suite of tools to utilize response time data for a recovery of preferences without or with fewer distributional assumptions.

Throughout most of the paper, we have interpreted SCF-RTs as describing the choices of a single individual who is confronted with the same set of options repeatedly. Random utility then reflects fluctuating tastes or noisy perception of the options. However, our tools also work when the data is generated by a heterogeneous population of individuals, each of whom makes a deterministic choice at a deterministic response time \citep[as in][]{EcheniqueSaito17}. Random utility then reflects a distribution of deterministic utility functions within the population, and response times vary because the difficulty of the choice problem varies with the subjective utility difference. At first glance, a one-to-one translation of our results seems to require the assumption that the same chronometric function applies to all individuals.\footnote{Empirical results by \citet{Chabrisetal09} and \citet{KonovalovKrajbich17} indicate that response times (even as little as one observation per individual) can indeed be used to track down parametric differences in utilities across individuals.} However, what we called a noisy chronometric function in Section \ref{Sec.Specific} can readily be interpreted as a distribution of chronometric functions within the population. In this population interpretation, a revealed preference for $x$ over $y$ means that utilitarian welfare with $x$ is larger than with $y$. Thus, the use of response time data is a novel way to approach the long-standing problem of how to measure the cardinal properties of utility that utilitarianism relies on \citep[see][]{Aspremont02}. The requirement that each individual's chronometric function is drawn independently from a restricted class, as described in Section \ref{Sec.Specific}, mirrors the interpersonal comparability of utility units that utilitarianism requires.

There is a range of interesting questions that we leave for future research. First, our results lend themselves to empirical testing. A first natural step is to work with experimental choice data from the lab, where response times are easy to measure. Later work could study real-world data e.g.\ from online marketplaces, where the time a consumer spends contemplating the options could be (and presumably is already) recorded. A challenge will be to differentiate response time in our sense from other concepts such as the time required to read information or to deliberate on the consequences of an action, which may have other qualitative predictions \citep[as in][]{RubinsteinRT07,Rubinstein13}.

Second, we have not attempted a full characterization of rationalizability for arbitrary SCF-RTs beyond those studied in Section \ref{Sec.Specific}. For the case without response times, characterizations are relatively simple and have been given in the literature. For instance, it can be shown that an SCF is rationalizable in our class of symmetric RUMs if and only if the binary relation $R^s$ defined in Section \ref{Sec.Reveal} has no cycles \citep[in the sense of][]{Suzumura76}. The problem is substantially more involved when response time distributions have to be rationalized, too. However, some useful necessary conditions are easy to obtain. For instance, an SCF-RT is rationalizable in the class of symmetric RUM-CFs only if both $R^s$ and $R^{srt}$ have no cycles, as otherwise there cannot exist a utility function that is consistent with the revealed preferences. Analogous conditions hold for rationalizability in the classes of all RUM-CFs and Fechnerian RUM-CFs. These simple conditions provide a specific test of our response-time-based model and allow it to be falsified by the data. 

Finally, response times are a particularly simple measure with a well-established relation to underlying preferences, but they may not be the only auxiliary data with that property. Physiological measures such as pupil dilation, blood pressure, or brain activation may also carry systematic information about preferences. It is worth exploring to what extent these measures can improve the classical revealed preference approach and should therefore be added to the economics toolbox.
 
%
%
%

\newpage
\singlespacing
\providecommand{\noopsort}[1]{}

\newpage
\singlespacing
\section*{Appendices}

\begin{appendix}
\section{Omitted Proofs}\label{App.Proofs}

\subsection{Proof of Proposition \ref{Prop.RUM}}

Consider any SCF and fix an arbitrary utility function $u: X \rightarrow \RR$. We will construct a RUM with utility function $u$ that rationalizes the SCF. Since $u$ is arbitrary, it follows that no preference between any $x$ and $y$ with $x \neq y$ is revealed.

For all $(x,y) \in C \setminus D$, choose arbitrary densities $g(x,y)$ and $g(y,x)$ so that (RUM.1-3) are satisfied. For $(x,y) \in D$, where w.l.o.g.\ $v(x,y) \geq 0$, define
\[g(x,y)(v) = \begin{cases}
0 & \text{if } \delta(x,y) < v\\
d(x,y) & \text{if } 0 \leq v \leq \delta(x,y)\\
p(y,x) & \text{if } -1 \leq v < 0\\
0 & \text{if } v < -1
\end{cases}\]
and $g(y,x)(v)=g(x,y)(-v)$ for all $v \in \mathbb{R}$, with \[d(x,y)=\frac{p(x,y)^2}{p(y,x)+2v(x,y)} > 0 \, \text{ and } \, \delta(x,y)=\frac{p(y,x)+2v(x,y)}{p(x,y)}>0.\]
We then obtain that
\begin{align*}\int_{-\infty}^{+\infty} v g(x,y)(v) dv & = \int_{-1}^0 v p(y,x) dv + \int_0^{\delta(x,y)} v d(x,y) dv \\ & = -\frac{1}{2} p(y,x) + \frac{1}{2} \delta(x,y)^2 d(x,y)  =v(x,y),\end{align*}
so that (RUM.1-3) are satisfied. It also follows that $G(x,y)(0)=p(y,x)$, so this RUM indeed rationalizes the SCF. \hfill $\qedsymbol$

\subsection{Proof of Corollary \ref{Cor.RUMRT1}}

The condition that $q(x,y)(t) \geq 1$ for almost all $t \geq 0$ can be rewritten as \[p(x,y)f(x,y)(t) \geq p(y,x)f(y,x)(t)\] for almost all $t \geq 0$. It implies \[p(x,y)F(x,y)(t)= \int_0^{t} p(x,y)f(x,y)(\tau) d\tau \geq \int_0^{t} p(y,x)f(y,x)(\tau) d\tau = p(y,x)F(y,x)(t)\] for all $t \geq 0$. Hence $Q(x,y)(t) \geq 1$ holds for all $t>0$, which implies a revealed preference for $x$ over $y$ by (\ref{Eq.Q}) and Theorem \ref{Thm.RUMRT}. The argument for strict preferences is analogous. \hfill $\qedsymbol$

\subsection{Proof of Corollary \ref{Cor.RUMRT2}}

Let $(u,g,r)$ be any RUM-CF which rationalizes an SCF-RT $(p,f)$. For any $x,y \in X$ with $(x,y) \in T(R^{rt})$, it follows that there exists a sequence $x_1,x_2,\ldots,x_n$ with $x_1=x$ and $x_n=y$ such that, for each $k=1,\ldots,n-1$, we have $(x_k,x_{k+1}) \in R^{rt}$ and hence $v(x_k,x_{k+1})\geq 0$ by definition of $R^{rt}$ and Theorem \ref{Thm.RUMRT} (or trivially, if $x_k=x_{k+1}$). This implies $v(x,y)=\sum_{k=1}^{n-1} v(x_k,x_{k+1}) \geq 0$, i.e., a revealed preference for $x$ over $y$.

If $(x,y) \in T_P(R^{rt})$, the above sequence cannot at the same time satisfy $(x_{k+1},x_k) \in R^{rt}$ for each $k=1,\ldots,n-1$. Hence $(x_{k^*},x_{k^*+1}) \in P^{rt}$ for some $k^*=1,\ldots,n-1$. We claim that this implies $v(x_{k^*},x_{k^*+1})>0$, and therefore $v(x,y)>0$, i.e., a revealed strict preference for $x$ over $y$. The fact that $(x_{k^*+1},x_{k^*}) \notin R^{rt}$ implies that $x_{k^*} \neq x_{k^*+1}$ and that there exists $t^* \geq 0$ such that \[F(x_{k^*},x_{k^*+1})(t^*) > \frac{p(x_{k^*+1},x_{k^*})}{p(x_{k^*},x_{k^*+1})} F(x_{k^*+1},x_{k^*})(t^*).\] Together with $(x_{k^*},x_{k^*+1}) \in R^{rt}$ this implies $F(x_{k^*+1},x_{k^*})$ $q$-SFSD $F(x_{k^*},x_{k^*+1})$ for $q=p(x_{k^*},x_{k^*+1})/p(x_{k^*+1},x_{k^*})$, so that our claim follows from Theorem \ref{Thm.RUMRT}. \hfill $\qedsymbol$

\subsection{Proof of Proposition \ref{Prop.SRUM}}

Let $(u,g)$ be a symmetric RUM which rationalizes an SCF $p$. By symmetry, we have that $G(x,y)(v(x,y))=1/2$ for all $(x,y) \in C$. Hence $G(x,y)(0) < 1/2$ implies $v(x,y) > 0$, because $G(x,y)(v)$ is increasing in $v$. Furthermore, $G(x,y)(0) = 1/2$ implies $v(x,y) = 0$, because $G(x,y)(v)$ is strictly increasing in $v$ in the connected support of $g(x,y)$, by (RUM.3). 

Suppose $p(x,y) \geq p(y,x)$ for some $(x,y)\in D$. From Definition \ref{defratrum} it then follows that $G(x,y)(0)=p(y,x)\leq 1/2$ and hence $v(x,y)\geq 0$, i.e., a revealed preference for $x$ over $y$. If $p(x,y) > p(y,x)$, then analogously $G(x,y)(0)=p(y,x)< 1/2$ and hence $v(x,y)>0$, i.e., a revealed strict preference for $x$ over $y$.\hfill $\qedsymbol$

\subsection{Proof of Corollary \ref{Cor.SRUM}}

The proof is similar to the proof of Corollary \ref{Cor.RUMRT2} and therefore omitted.\hfill $\qedsymbol$

\subsection{Proof of Corollary \ref{Cor.SRUMRT}}

Let $(u,g,r)$ be any symmetric RUM-CF which rationalizes an SCF-RT $(p,f)$. For any $x,y \in X$ with $(x,y) \in T(R^s \cup R^{srt})$, it follows that there exists a sequence $x_1,x_2,\ldots,x_n$ with $x_1=x$ and $x_n=y$ such that, for each $k=1,\ldots,n-1$, either $(x_k,x_{k+1}) \in R^{s}$ or $(x_k,x_{k+1}) \in R^{srt}$. It follows by definition of $R^{s}$ and Proposition \ref{Prop.SRUM}, or by definition of $R^{srt}$ and Theorem \ref{Thm.SRUMRT}, that $v(x_k,x_{k+1})\geq 0$ in either case. This implies $v(x,y)=\sum_{k=1}^{n-1} v(x_k,x_{k+1}) \geq 0$, i.e., a revealed preference for $x$ over $y$.

If $(x,y) \in T_P(R^s \cup R^{srt})$, there must exist $k^*=1,\ldots,n-1$ such that, in the above sequence, neither $(x_{k^*+1},x_{k^*}) \in R^s$ nor $(x_{k^*+1},x_{k^*}) \in R^{srt}$, hence either $(x_{k^*},x_{k^*+1}) \in P^s $ or $(x_{k^*},x_{k^*+1}) \in P^{srt}$. If $(x_{k^*},x_{k^*+1}) \in P^s $, then $v(x_{k^*},x_{k^*+1})> 0$ by definition of $R^s$ and Proposition \ref{Prop.SRUM}. If $(x_{k^*},x_{k^*+1}) \in P^{srt} $, then $(x_{k^*},x_{k^*+1}) \in C \setminus D$ and $\exists z \in X$ such that $t(x_{k^*},z) < t(x_{k^*+1},z)$ or $t(z,x_{k^*}) > t(z,x_{k^*+1})$ by definition of $R^{srt}$, hence $v(x_{k^*},x_{k^*+1})> 0$ by Theorem \ref{Thm.SRUMRT}. This implies $v(x,y)>0$ in either case, i.e., a revealed strict preference for $x$ over $y$.\hfill $\qedsymbol$

\subsection{Proof of Proposition \ref{Prop.FRUM}}

Let $(u,g)$ be any Fechnerian RUM which rationalizes an SCF $p$. For each $(x,y) \in C$, the Fechnerian assumption implies 
\[G(x,y)(0) = \int_{-\infty}^{0} g(x,y)(v)dv = \int_{-\infty}^{0} g(v-v(x,y)) dv=\int_{-\infty}^{-v(x,y)}g(v)dv=G(v(y,x)),\]
where $G$ is the (strictly increasing) cumulative distribution function for $g$. Hence $G(z,x)(0) \geq G(z,y)(0)$ implies $G(v(x,z)) \geq G(v(y,z))$ and therefore $v(x,z) \geq v(y,z)$, which in turn implies $v(x,y)=v(x,z)-v(y,z)\geq0$. Furthermore, $G(z,x)(0) > G(z,y)(0)$ implies $v(x,y)>0$.

Consider any $(x,y) \in C \setminus D$ such that there exists $z \in X$ with $p(x,z) \geq p(y,z)$. Hence $G(z,x)(0) \geq G(z,y)(0)$ and $v(x,y)\geq 0$, i.e., a revealed preference for $x$ over $y$. If $p(x,z) > p(y,z)$ then analogously $v(x,y) > 0$, i.e., a revealed strict preference for $x$ over $y$.\hfill $\qedsymbol$

\subsection{Proof of Corollary \ref{Cor.FRUM}}

The proof is similar to the proof of Corollary \ref{Cor.SRUMRT} and therefore omitted.\hfill $\qedsymbol$

\subsection{Proof of Proposition \ref{Prop.RatRUMs}}

Consider any symmetric RUM-CF $(u,g,r)$. Suppose $v(x,y) \geq 0$. Fix any $t>0$. If $r^{-1}(t) \geq v(x,y)$, let $\delta(t)=r^{-1}(t) - v(x,y) \geq 0$. We obtain
\begin{align*}
1 - G(x,y)(r^{-1}(t)) & = 1 - G(x,y)(v(x,y) + \delta(t)) \\
& = G(x,y)(v(x,y) - \delta(t)) \\
& = G(x,y)(-r^{-1}(t) +2 v(x,y)) \\
& \geq G(x,y)(-r^{-1}(t)),
\end{align*}
where the second equality follows from symmetry. If $r^{-1}(t) < v(x,y)$, which requires that $v(x,y)>0$, let $\delta(t)=v(x,y) - r^{-1}(t) > 0$, so that
\begin{align*}
1 - G(x,y)(r^{-1}(t)) & = 1 - G(x,y)(v(x,y) - \delta(t)) \\
& = G(x,y)(v(x,y) + \delta(t)) \\
& = G(x,y)(-r^{-1}(t) +2 v(x,y)) \\
& > G(x,y)(-r^{-1}(t)),
\end{align*}
where the second equality again follows from symmetry, and the inequality is strict since $G(x,y)(-r^{-1}(t)) < G(x,y)(v(x,y))=1/2 < G(x,y)(-r^{-1}(t) + 2v(x,y))$, because $G(x,y)(v)$ is strictly increasing in $v$ in the connected support of $g(x,y)$, by (RUM.3). 

Suppose an SCF-RT $(p,f)$ is generated by $(u,g,r)$. It follows that equation (\ref{eq.p1}) derived in the proof of Theorem \ref{Thm.RUMRT} holds for any $(x,y) \in D$ and all $t>0$. Combined with the above inequalities, whenever $v(x,y) \geq 0$ we obtain that $Q(x,y)(t) \geq 1$ for all $t>0$, or $F(y,x)$ $q$-FSD $F(x,y)$ for $q=p(x,y)/p(y,x)$. If $v(x,y) > 0$, then the above case where $r^{-1}(t) < v(x,y)$ indeed arises for large enough $t$, which additionally implies that $Q(x,y)(t) > 1$ for some $t$, or $F(y,x)$ $q$-SFSD $F(x,y)$ for $q=p(x,y)/p(y,x)$.\hfill $\qedsymbol$

\subsection{Proof of Proposition \ref{Prop.RatNRUMs}}

Let $(p,f)$ be an SCF-RT that is generated by a symmetric RUM-NCF $(u,g,r,h)$ in which each $g(x,y)$ is strictly positive on $\mathbb{R}$. Consider the underlying symmetric RUM-CF $(u,g,r)$ and note that it generates an SCF-RT $(p,\hat{f})$, i.e., response time densities $\hat{f}(x,y)$ with full support. This holds because each $g(x,y)$ is strictly positive on $\mathbb{R}$ by assumption, and $r$ must be strictly positive on $\mathbb{R}^{++}$ as otherwise $(u,g,r,h)$ would have generated an atom at the response time of zero.

For any $(x,y) \in D$, it then follows from Proposition \ref{Prop.RatRUMs} that $u(x) \geq u(y)$ implies $\hat{F}(y,x)$ $q$-FSD $\hat{F}(x,y)$ and $u(x) > u(y)$ implies $\hat{F}(y,x)$ $q$-SFSD $\hat{F}(x,y)$, for $q=p(x,y)/p(y,x)$.

Fix any $(x,y) \in D$. The fact that $(u,g,r,h)$ generates $(p,f)$ implies \begin{align}\label{ratrumncf}F(x,y)(t) = \int_{0}^{\infty} \left[ \frac{1 - G(x,y)(r^{-1}(t/\eta)) }{1 - G(x,y)(0)}\right] h(\eta) d \eta \end{align} for all $t >0$. Since $(u,g,r)$ generates $(p,\hat{f})$, we obtain \[ \frac{1 - G(x,y)(r^{-1}(z))}{1 - G(x,y)(0)} = \hat{F}(x,y)(z)\] for all $z>0$. Evaluated at $z=t/\eta$ and substituted into (\ref{ratrumncf}), this yields \[F(x,y)(t) = \int_{0}^{\infty} \hat{F}(x,y)(t/\eta) h(\eta) d \eta \] for all $t >0$, and analogously \[F(y,x)(t) = \int_{0}^{\infty} \hat{F}(y,x)(t/\eta) h(\eta) d \eta.\] Now $\hat{F}(y,x)$ $q$-FSD $\hat{F}(x,y)$ implies \[F(y,x)(t) = \int_{0}^{\infty} \hat{F}(y,x)(t/\eta) h(\eta) d \eta \leq  \int_{0}^{\infty} q \cdot \hat{F}(x,y)(t/\eta) h(\eta) d \eta = q \cdot F(x,y)(t),\] for all $t \geq 0$, i.e., $F(y,x)$ $q$-FSD $F(x,y)$. Furthermore, $\hat{F}(y,x)$ $q$-SFSD $\hat{F}(x,y)$ implies that the inequality is strict for some $t$, i.e., $F(y,x)$ $q$-SFSD $F(x,y)$.\hfill $\qedsymbol$

\subsection{Proof of Proposition \ref{Prop.RatDDMs}}

Suppose an SCF-RT $(p,f)$ is generated by a DDM with constant or collapsing boundaries and underlying utility function $u$. Consider any $(x,y)\in D$. We need to show that $v(x,y) \geq 0$ and hence $\mu(x,y)\geq 0$ implies $F(y,x)$ $q$-FSD $F(x,y)$, and $v(x,y) > 0$ and hence $\mu(x,y)> 0$ implies $F(y,x)$ $q$-SFSD $F(x,y)$, for $q=p(x,y)/p(y,x)$.

Consider first the case of constant boundaries. It is well-known \citep[see e.g.][]{Palmeretal05} that $\mu(x,y)\geq 0$ implies $p(x,y) \geq p(y,x)$ and $\mu(x,y)> 0$ implies $p(x,y) > p(y,x)$. Furthermore, the distributions of response times conditional on either choice are identical, i.e., $F(x,y)(t)=F(y,x)(t)$ for all $t \geq 0$ \citep[see again][]{Palmeretal05}. The conclusion follows then immediately.

Consider now the case of collapsing boundaries. For this case, \citet[proof of Theorem 1]{FudenbergStrackStrzalecki18} show that (adapting their notation to ours)
\[q(x,y)(t)= \frac{p(x,y) f(x,y)(t)}{p(y,x) f(y,x)(t)}= \exp\left(\frac{\mu(x,y)  b(t)}{\sigma^2/2}\right)\]
for all $t \geq 0$. We obtain that $q(x,y)(t)\geq 0$ for all $t \geq 0$ when $\mu(x,y) \geq 0$, and $q(x,y)(t)> 0$ for all $t \geq 0$ when $\mu(x,y) > 0$. The conclusion now follows as in the proof of Corollary \ref{Cor.RUMRT1}.\hfill $\qedsymbol$

\subsection{Proof of Proposition \ref{Prop.Rat2Opt}}

Suppose $D=\{(x,y),(y,x)\}$ and consider an arbitrary SCF-RT $(p,f)$. Fix any function $b:\RR^+\to\RR^{++}$ that is continuous and strictly decreasing with $\lim_{t\to \infty} b(t) = 0$. Let its inverse be $r=b^{-1}$, with the understanding that $r(v)=0$ if $v > b(0)$. Note that $r:\RR^{++}\to\RR^{+}$ satisfies the requirements of a chronometric function in Definition \ref{defrumrt}. Now define $G(x,y)$ by
\begin{align*} 
G(x,y)(v)= \begin{cases} 0 & \text{if } v < -b(0),\\ p(y,x) F(y,x)(t) & \text{if } v = -b(t) \text{ for some } t \geq 0, \\ p(y,x) & \text{if } v = 0, \\ 1 - p(x,y) F(x,y)(t) & \text{if } v = b(t) \text{ for some } t \geq 0, \\ 1 & \text{if } v > b(0),
\end{cases}
\end{align*}
which, due to the assumed properties of $b$, is well-defined and describes a distribution with connected support $[-b(0),+b(0)]$ that admits a density $g(x,y)$. Let $g(y,x)(v)=g(x,y)(-v)$ for all $v \in \mathbb{R}$. Finally, choose any function $u: X \rightarrow \mathbb{R}$ such that \[u(x)- u(y)=\int_{-b(0)}^{+b(0)} v g(x,y)(v)dv,\] and let $g(w,z)$ for all $(w,z) \in C \backslash  D$ be arbitrary so that (RUM.1-3) are satisfied. It now follows immediately that the RUM-CF $(u,g,r)$ rationalizes the SCF-RT $(p,f)$.\hfill $\qedsymbol$

\newpage

\section{Additional Material}\label{App.Additional}

\subsection{Theorem \ref{Thm.RUMRT} versus Corollary \ref{Cor.RUMRT1}}

Assume we have data on a single pair $(x,y)$ such that $p(x,y)=3/4$, $p(y,x)=1/4$,
\[f(x,y)(t)=\begin{cases}
          \frac{4}{3} t^3 & \textrm{ if } 0 \leq t < 1,\\
          \frac{4}{3}\left(\frac{2-t}{t^3}\right) & \textrm{ if } 1 \leq t < 2,\\
          \frac{4}{3}\left(\frac{t-2}{t^3}\right) & \textrm{ if } t \geq 2,
\end{cases}\]
and
\[f(y,x)(t)=\frac{4 t^3}{(1+t)^5} \;\;\; \text{ for all } t\geq 0.\]
Note that $f(x,y)$ and $f(y,x)$ are continuous densities,\footnote{The fact that $f(x,y)(t)=0$ for $t=0,2$ and $f(y,x)(t)=0$ for $t=0$ is not a problem for our SCF-RT definition which requires strictly positive densities, because only isolated points are affected.} with corresponding cdfs
\[F(x,y)(t)=\begin{cases}
          \frac{1}{3} t^4 & \textrm{ if } 0 \leq t < 1,\\
          \frac{1}{3} + \frac{4}{3}\left(\frac{t-1}{t^2}\right) & \textrm{ if } 1 \leq t < 2,\\
          1 - \frac{4}{3}\left(\frac{t-1}{t^2}\right) & \textrm{ if } t \geq 2,
\end{cases}\]
and
\[F(y,x)(t)=\frac{t^4}{(1+t )^4} \;\;\; \text{ for all } t\geq 0.\]

It is easy to see that $F(y,x)$ $q$-SFSD $F(x,y)$ holds for $q=p(x,y)/p(y,x)=3$. For $0 \leq t <1$, the condition $F(y,x)(t) \leq 3 F(x,y)(t)$ reduces to $(1+t)^4 \geq 1$, which is satisfied. For $t\geq 1$, we have $F(y,x)(t) < 1 \leq 3 F(x,y)(t)$. Hence a strict preference for $x$ over $y$ is revealed according to Theorem \ref{Thm.RUMRT}, provided that the SCF-RT is rationalizable (which we will show below). At the same time, $f(y,x)(t) > 3 f(x,y)(t)$ holds for an open interval of response times around $t=2$. This follows immediately from $f(y,x)(2) > 0 = 3 f(x,y)(2)$ and continuity of $f(x,y)$ and $f(y,x)$. Hence Corollary \ref{Cor.RUMRT1} is not applicable.

The SCF-RT is rationalizable because it is generated by the RUM-RT $(u,g,r)$ with $u(x)-u(y)=1/2$, $r(v)=1/v$, and the symmetric bimodal distribution given by
\[g(x,y)(v)=\begin{cases}
          \frac{1}{(1-v)^5} & \textrm{ if } v \leq 0,\\
          1-2v & \textrm{ if } 0 < v \leq \frac{1}{2},\\
          2v - 1 & \textrm{ if } \frac{1}{2} < v \leq 1,\\
          \frac{1}{v^5} & \textrm{ if } v > 1,
\end{cases}\]
and
\[G(x,y)(v)=\begin{cases}
          \frac{1}{4(1-v)^4} & \textrm{ if } v \leq 0,\\
          \frac{1}{4} + v(1-v) & \textrm{ if } 0 < v \leq \frac{1}{2},\\
          \frac{3}{4} - v(1-v) & \textrm{ if } \frac{1}{2} < v \leq 1,\\
          1 - \frac{1}{4v^4} & \textrm{ if } v > 1.
\end{cases}\]

\end{appendix}

\end{document}